\documentclass[lettersize,journal]{IEEEtran}
\usepackage{amsmath,amssymb,amsfonts}
\usepackage{amsthm}
\usepackage[ruled,linesnumbered]{algorithm2e}
\usepackage{array}
\usepackage[caption=false,font=normalsize,labelfont=sf,textfont=sf]{subfig}
\usepackage{textcomp}
\usepackage{stfloats}
\usepackage{url}
\usepackage{tcolorbox}
\usepackage{verbatim}
\usepackage{graphicx}
\newtheorem{theorem}{Theorem}
\newtheorem{lemma}{Lemma}

\newtheorem{definition}{Definition}
\hyphenation{op-tical net-works semi-conduc-tor IEEE-Xplore}
\def\BibTeX{{\rm B\kern-.05em{\sc i\kern-.025em b}\kern-.08em
    T\kern-.1667em\lower.7ex\hbox{E}\kern-.125emX}}
\usepackage{balance}
\usepackage{hyperref}
\hypersetup{
    colorlinks=true,
    linkcolor=blue,
    filecolor=magenta,      
    urlcolor=cyan,
    }
\begin{document}
\title{Influential Billboard Slot Selection using Spatial Clustering and Pruned Submodularity Graph}
\author{Dildar Ali, Suman Banerjee, and Yamuna Prasad \\
\thanks{Manuscript created October, 2020; This work was developed by the IEEE Publication Technology Department. This work is distributed under the \LaTeX \ Project Public License (LPPL) ( http://www.latex-project.org/ ) version 1.3. A copy of the LPPL, version 1.3, is included in the base \LaTeX \ documentation of all distributions of \LaTeX \ released 2003/12/01 or later. The opinions expressed here are entirely that of the author. No warranty is expressed or implied. User assumes all risk.}}


\maketitle

\begin{abstract}
Billboard advertising is a popular out-of-home advertising technique adopted by commercial houses. Companies own billboards and offer them to commercial houses on a payment basis. Given a database of billboards with slot information, we want to determine which $k$ slots to choose to maximize influence. We call this the \textsc{Influential Billboard Slot Selection (IBSS)} Problem and pose it as a combinatorial optimization problem. We show that the influence function considered
in this paper is non-negative, monotone, and submodular. The incremental greedy approach based on the marginal gain computation leads to a constant factor approximation guarantee. However, this method scales very poorly when the size of the problem instance is very large. To address this, we propose a spatial partitioning and pruned submodularity graph-based approach that is divided into the following three steps: pre-processing, pruning, and selection. We analyze the proposed solution approaches to understand their time, space requirement, and performance guarantee. We conduct extensive set of experiments with real-world datasets and compare the performance of the proposed solution approaches with the available baseline methods. We observe that the proposed approaches lead to more influence than all the baseline methods within reasonable computational time.
\end{abstract}

\begin{IEEEkeywords}
Outdoor Advertising, Trajectory Data, Optimization, Influence Maximization.
\end{IEEEkeywords}

\section{Introduction}
\label{sec:introduction}
\IEEEPARstart{O}{ne} of the key objectives of a commercial house is to
create and increase influence among clients. They invest between $7$ to $10\%$ of their annual revenue in the advertisement process. How this budget can be used effectively is an important research problem in the domain of computational advertisement. Advertisements can be done in several ways social media, television, etc. Recently, the out-of-home advertisement technique has emerged as an effective technique where advertisement contents (e.g., video, animation, etc.) are displayed through digital billboards. As this method gives a significantly higher return on investment this has been adopted largely. According to a recent market survey, billboard advertising has a $65\%$ higher return on investment compared to other advertising techniques. Additionally, Zhang et al. \cite{10.1145/3292500.3330829} conducted a recent study that demonstrated that over $50\%$ of travelers are impressed by at least $5$ billboards during each trip. Previous research in consumer behavior literature has indicated that when a traveler is exposed to an advertisement multiple times, it becomes highly unlikely that they will take any action as a result \cite{10.2307/1153228,f9b31366586b47d389955f209b69da27,SIERZCHULA2014183}.
 
 \par In this advertisement technique, it is assumed that the billboards are with a company and an E-Commerce house wants to choose $k$ of them to maximize the influence. Here, the hope is that if the E-Commerce house can select billboard slots such that the advertisement content is visible to a wider range of people, then this may create a significant influence among the people. This can boost the product sales and financial gains. Due to budgetary constraints, the E-Commerce business can only afford a minimal number of billboard slots. Therefore, a crucial concern is which billboard slot should be picked for publishing the advertisement material for a particular value of $k \in \mathbb{Z}^{+}$. Numerous researchers have recently addressed this issue, and many approaches to its solution have been put forth \cite{wang2022data}. The development of wireless technology and mobile internet has made it simpler today, to track down the location of moving things. As a result, several trajectory datasets are made available in various repositories. These trajectories are properly utilized to solve a variety of real-world issues, including route suggestion \cite{dai2015personalized,qu2019profitable}, predicting driving behavior of vehicles  \cite{xue2019rapid}, and many more. In recent studies, these trajectory datasets have been used for locating billboards in the most influential zones, as was previously described \cite{zhang2020towards}.
 
\par Consider that a trajectory database $\mathbb{T}$ for any city is with us. This database contains the locations of different users along with the time stamps. Now, locations that are of our interest are crowded places like `Shopping Malls', `Street Junctions', `Metro Stations', etc. Digital billboards are placed in these locations and can be hired by E-Commerce houses slot-wise on a payment basis. The information about different billboards (e.g., location, slot duration, cost of each slot, etc.) are available as a billboard database $\mathbb{D}$. It is assumed that the influence is calculated by the \emph{triggering model}, which is well-studied in the influence maximization literature \cite{li2018influence}. Naturally, the key question arises in this context is that given the trajectory and billboard database, and a positive integer $k$, which $k$ slots are to be chosen to maximize the influence?. This problem has been referred to as the \textsc{Influential Billboard Slot Selection (IBSS)} Problem. Though this question has been addressed in a few existing studies, the scalability and efficiency remain a bottleneck. To address this issue, in this paper, we have proposed three solution approaches. The first one is the pruned submodularity graph-based approach. The second approach is spatial clustering along with the incremental greedy approach, and the third approach is based on spatial clustering and pruned submodularity graph. In summary, our contributions in this paper are as follows:

\begin{itemize}
\item We study the \textsc{IBSS} problem which is \textsf{NP-hard} and hard to approximate within a constant factor and is of practical importance.
\item We propose three approaches to solve the IBSS Problem, namely pruned submodularity graph-based approach, spatial clustering+incremental greedy approach, and spatial clustering+pruned submodularity graph approach.
\item All the proposed solution approaches have been analyzed to understand their time and space complexities as well as performance guarantees.

\item All the methodologies have been implemented with real-world trajectory datasets and a number of experiments have been conducted to demonstrate the performance and the efficiency  of the proposed solution approaches.
\end{itemize}

The remaining part of the paper is structured as follows. In Section \ref{Sec:RW}, we discuss previous studies related to our problem. Section \ref{Sec:PPD} provides background information and formally defines the problem statement. In Section \ref{Sec:PA}, we present our proposed solution approach. Section \ref{Sec:EE} details our experimental evaluations of the proposed solution using real-world datasets. Finally, in Section \ref{Sec:CFD}, we present the conclusions of our study and suggest future research directions.

\section{Related Work} \label{Sec:RW}
In this section, we describe the previous studies on influence maximization in billboard advertisements. We divide our literature survey into four major categories: Influential Site Selection, Influential Billboard Selection, Trajectory Driven Influence Maximization, and Regret Minimization.

\par The influential Site Selection problem has gained the researcher’s attention in previous studies due to its importance in broad application areas. Zhou et al. \cite{5767892} studied the MaxBRkNN problem in their article. This problem involves finding the optimal location, the $k$-nearest neighbor of the largest number of trajectories, according to the distance between this location and the trajectories locations. In addition, the MaxBRkNN problem assumes each user remains at a fixed place. But in our case, we fixed the billboard’s location and evaluated the influence on moving trajectories. The location-aware influence maximization problems \cite{10.1145/2588555.2588561,7534856} are motivated by influence maximization problems, which aim to choose a subset of nodes of size $k$ from the social network. Although both influence maximization problems and our proposed approach show the exact purpose of maximizing the influence, there is a difference when the $IM$ problem influence is spread from one user to another in a social network. Still, in our influence model, an audience is only influenced if they come across the range of a billboard. In the last few years, different spatial properties have been considered for developing efficient partition algorithms, such as Wang et al.\cite{10.14778/1687627.1687754} studied the problem to find an optimal region that maximizes the area of BRNN by intersecting geometric shapes. Xia et al. \cite{10.5555/1083592.1083701} studied the problem of finding the most influential places, and for this, they introduced a novel pruning technique to prune less influential sites.
 
\par The trajectory-driven billboard placement problem that Zhang et al.\cite{zhang2020towards} studied is closely related to our \textsc{IBSS} problem. They solved the problem of placing billboards in a way that maximizes their influence on trajectories. This problem gives a set of billboards with their respective locations and costs. The objective is to select a subset of $k$ billboards that most impact the trajectory within the given budget. But, in our work, we find the most influential billboard slots based on budget constraint $k$, and we have used the same triggering model of influence they used to calculate influence in their work. In our \textsc{IBSS} problem, our goal is to find a subset of billboard slots to maximize the influence, and a similar type of subset selection problem was studied by Kuller et al. in their article on the maximum coverage problem \cite{10.1016/S0020-0190(99)00031-9}. In this problem, each element in the subset has a cost associated with it, and the objective is to maximize the coverage within the given budget constraints. Further, they show that normal incremental Greedy does not produce solutions with the required approximation ratio. To overcome this issue, they have introduced a new variation of the greedy approach to achieve the approximation ratio of $(1-\frac{1}{e})$. In our case, we have used the spatial partition approach and subsequently applied pruned submodularity graph followed by incremental greedy, showing that it successfully achieved an approximation ratio of $(1-\frac{1}{e})$. Later, in a similar direction, the Targeted Outdoor Advertising Recommendation problem was studied by Wang et al. \cite{wang2022data}. In this problem, they aim to select a subset of billboard slots for a given budget constraint. Their primary focus was developing a targeted influence model to capture the spread of advertising influence concerning user mobility. This model was the primary contribution of their study. They developed two solution strategies based on the divide and conquer approach and considering user-profiles and advertisement topics. To find out the actual price for outdoor advertising, Zahradka et al.\cite{zahradka2021price} studied the cost of out-of-home advertising in the different regions of the Czech Republic and showed a detailed analysis of how the price based on area varies. 

\par Due to the effectiveness of influence maximization over trajectory in the past few years, Guo et al.\cite{7929916} studied the problem of finding $k$ best trajectories instead of finding $k$ influential places. Their main goal is to find the best trajectories attached to particular advertisements to maximize the influence among target trajectories. Later, Zhang et al.\cite{10.1145/3292500.3330829} studied a problem that found a set of top influential billboards that impress more trajectories under some budget constraints. They introduce a tangent line-based algorithm to estimate an upper bound of billboard influence and reduce the computational cost using a user-defined parameter-based termination method. It achieves an approximation ratio of $\frac{\theta}{2}(1- \frac{1}{e})$. Recently, Wang et al.\cite{8118111} studied a new kind of query, $RkNNT$, which estimates route capacity and planning by considering the user’s source and destination information from the trajectory dataset. Now, in the case of any outdoor or online advertising technique, choosing the right audience for the right advertisement is a very challenging task due to the lack of an audience profile. Now, Wang et al. \cite{8604082} studied that the outdoor advertising industry suffers from delivery of influence from the targeted audience. They introduced a divide and conquer based search approach to resolve this gap. Instead of considering only billboards and trajectories, they considered three factors, i.e., advertisement content, trajectory behavior, and mobility transition.

\par In the field of billboard advertising, there have been limited studies that have taken into account the regret minimization that arises from providing influence to advertisers. Aslay et al. studied the problem of allocating advertisements using viral marketing \cite{10.14778/2752939.2752950}. Here, the advertiser aims to get sufficient virality within the budget constraint. On the other hand host's goal is to provide only the required virality to the advertiser, as giving more virality does not provide any extra incentive to the host. Later, Zhang et al. addressed this issue in their article \cite{zhang2021minimizing}, where they studied the MROAM problem of regret minimization. They proposed a randomized local search algorithm that can minimize the total regret from the perspective of the billboard provider. They also proved that this algorithm achieves an approximation ratio to a dual problem.

\section{Preliminaries and Problem Definition} \label{Sec:PPD}
In this section, we describe the background of the problem and defines it formally. Let us assume, there is a set of $m$ billboard $\mathbb{B}=\{b_1, b_2, \ldots, b_m\}$. Each one of them are operating for the time interval $[t_1,t_2]$, and we assume that $T = t_{2}- t_{1}$. Consider each billboard is allocated slot-wise for displaying advertisement and each slot have a time duration of $\Delta$. Basically, these billboards are located in different places like restaurant, metro places, road junctions etc. in a city. Assuming that a person $u_i$ crossing across a billboard, $b_i$ at a time $t_i$. Now assume that the advertisement content of an E-Commerce house is displayed on that billboard in the slot $[t_x,t_y]$, and $t_i \in [t_x,t_y]$. Then $u_i$ is likely to be influenced by the advertisement content with certain probability. The billboard $b_i$ will influence the trajectory $t_j$ with probability $Pr(b_i,t_j)$. One of the way to calculate this value as,  $Pr(b_i,u_i) = \frac{Size(b_i)}{\underset{b_{i} \in \mathbb{B}}{max} \ Size(b_{i})} $ where $Size(b_{i})$ is the billboard panel size. We adopt this probability settings in our experiments as well. Although it can be calculated in several ways depending to the needs of applications \cite {10.1145/3292500.3330829, 8604082,zhang2020towards}.
\par For any positive integer $n$, $[n]$ denotes the set $\{1,2, \ldots, n\}$.  We denote a billboard slot as a tuple of two attributes in which first attribute contains a unique billboard number, and second one contains starting and ending time of that slot. Considering this notation the set of all billboards can be expressed as $\mathcal{BS} = \{(b_i,[t_j,t_j+\Delta]): i \in [m] \text{ and } j \in \{1, 1+\Delta, 1+2\Delta, \ldots, \frac{T}{1+\Delta}\} \}$, and if total number of billboard slots $|\mathcal{BS}|=Q$ then, we say $Q= m \cdot \frac{T}{\Delta}$. As mentioned in the billboard slot selection problem, two databases namely Trajectory and Billboard Database are part of inputs and they are defined in Definition \ref{Def:TD} and \ref{Def:BD}, respectively.  

\begin{definition} [Trajectory Database] \label{Def:TD}
A trajectory database $\mathbb{D}$ contains the tuples of the following form: $<u_{id}, \texttt{u\_loc},\texttt{time-slot}>$. Below are the descriptions for each attribute:
\begin{itemize}
\item $u_{id}$: This attribute holds unique id assigned to a person.
\item \texttt{u\_loc}: This attribute contains information about the location of the users identified by $u_{id}$.
\item \texttt{time}: This attribute holds information related to time duration.
\end{itemize}
In $\mathbb{D}$, the presence of the tuple $<u_{225}, \texttt{Republic\_Airport}, [1400,1600]>$  means that the person with the unique id $u_{225}$ was present at \texttt{Republic\_Airport} during the time interval from $1400$ to $1600$.
\end{definition}
 In Definition \ref{Def:TD}, we have only included the necessary attributes, but in actual datasets may have additional attributes, such as \texttt{trip\_id} and \texttt{vehicle\_id} etc. It may so happen that in $\mathbb{D}$, one user with a different time interval can be present in many tuples because of the mobility of the user. The set of all persons covered by $\mathbb{D}$ is denoted by $\mathcal{U}=\{u_1, u_2, \ldots, u_{p}\}$. Next, we describe Billboard Database in Definition \ref{Def:BD}.

\begin{definition}[Billboard Database]\label{Def:BD}
 A Billboard database $\mathbb{B}$ is a collection of tuples of the form $<b_{id}, \texttt{b\_{loc}}, \texttt{b\_cost}>$, where the meaning of each of the attributes are as follows:
 \begin{itemize}
 \item $b_{id}$: This store unique ids assigned to billboards.
 \item \texttt{b\_loc}: This stores billboard location information.
  \item  \texttt{b\_cost}: This attribute stores the costs for corresponding billboard for one slot. 
 \end{itemize}
 \end{definition}
If $\mathbb{B}$ contains a tuple $<b_{139}, \texttt{Republic\_Airport}, 25>$ means that billboard $b_{139}$ is already placed in the \texttt{Republic\_Airport} and $\$25$ is the cost for renting this billboard for one slot.

Consider, a billboard slot $b_{i} \in \mathcal{BS} $ running an advertisement of the brand `ABC' at the location $\texttt{Republic\_Airport}$ for the time interval $[t_{i}, t_{i}+ \Delta]$ and the person $u_{j}$ is present at $\texttt{Republic\_Airport}$ within the $\lambda$ distance from the billboard for the time duration of $[t_{j},t_{j}+\Delta]$. If $[t_{i}, t_{i}+ \Delta] \ \bigcap \ [t_{j},t_{j}+\Delta] \neq \emptyset$, then we can say that the $u_{i}$ is influenced by the advertised item with probability $Pr(b_i,u_j)$. Now, we defines the notion of influence in Definition \ref{Def:3}.

\begin{definition} [Influence of Billboard Slots] \label{Def:3}
Given a trajectory database $\mathbb{D}$, billboard database $\mathbb{B}$, and a subset of billboard slots $\mathcal{C} \subseteq \mathcal{BS}$, we denotes the influence of $\mathcal{C}$ as $I(\mathcal{C})$, and referred to be the addition of influence probability for each users. Mathematically, we defined it in Equation \ref{Eq:1}.
 
 \begin{equation} \label{Eq:1}
 I(\mathcal{C})= \underset{u_j \in \mathbb{U}}{\sum} [1 \ - \ \underset{b_i \in \mathcal{C}}{\prod} (1-Pr(b_i,u_j))]
 \end{equation}
\end{definition} 

Now, from Equation \ref{Eq:1} it is clear that the influence function $I()$ maps every possible subset of billboard slots to corresponding influence, i.e., $I: 2^{\mathcal{BS}} \longrightarrow \mathbb{R}^{+}_{0}$. Here,  $I(\emptyset)=0$. Next, we state the property of $I()$ in Lemma \ref{Lemma:1}.
 
 \begin{lemma} \label{Lemma:1}
Given a trajectory database $\mathbb{D}$, and billboard database $\mathbb{B}$, the influence function $I()$ is non-negative, monotone, and submodular in nature.
 \end{lemma}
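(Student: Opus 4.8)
The plan is to verify the three properties one at a time, abbreviating the per-user \emph{survival product} as $P_j(\mathcal{C}) = \prod_{b_i \in \mathcal{C}}(1 - Pr(b_i, u_j))$, so that Equation~\ref{Eq:1} reads $I(\mathcal{C}) = \sum_{u_j \in \mathbb{U}}[1 - P_j(\mathcal{C})]$. Non-negativity and monotonicity are immediate from the fact that every factor $1 - Pr(b_i, u_j)$ lies in $[0,1]$. For non-negativity, each product $P_j(\mathcal{C})$ therefore lies in $[0,1]$, so each summand $1 - P_j(\mathcal{C})$ is non-negative and $I(\mathcal{C}) \geq 0$ (with $I(\emptyset) = 0$ recovered from the empty product equalling $1$). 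For monotonicity with $\mathcal{C}_1 \subseteq \mathcal{C}_2$, I would factor $P_j(\mathcal{C}_2) = P_j(\mathcal{C}_1) \cdot \prod_{b_i \in \mathcal{C}_2 \setminus \mathcal{C}_1}(1 - Pr(b_i, u_j))$; since the trailing product is at most $1$, we get $P_j(\mathcal{C}_2) \leq P_j(\mathcal{C}_1)$ for every user, and summing $1 - P_j(\mathcal{C}_2) \geq 1 - P_j(\mathcal{C}_1)$ yields $I(\mathcal{C}_2) \geq I(\mathcal{C}_1)$.

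The main step is submodularity, and the key is to obtain a clean closed form for the marginal gain. Fixing a single slot $b \notin \mathcal{C}$ and a user $u_j$ with $p = Pr(b, u_j)$, adding $b$ multiplies the survival product by the extra factor $(1-p)$, so the per-user marginal contribution telescopes to $[1 - P_j(\mathcal{C} \cup \{b\})] - [1 - P_j(\mathcal{C})] = P_j(\mathcal{C}) - (1-p)P_j(\mathcal{C}) = Pr(b,u_j) \cdot P_j(\mathcal{C})$. Summing over users gives $I(\mathcal{C} \cup \{b\}) - I(\mathcal{C}) = \sum_{u_j \in \mathbb{U}} Pr(b,u_j) \cdot P_j(\mathcal{C})$. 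Now for $\mathcal{C}_1 \subseteq \mathcal{C}_2$ with $b \notin \mathcal{C}_2$, the monotonicity of the product established above gives $P_j(\mathcal{C}_1) \geq P_j(\mathcal{C}_2)$ for each user; multiplying by the non-negative factor $Pr(b,u_j)$ and summing yields the diminishing-returns inequality $I(\mathcal{C}_1 \cup \{b\}) - I(\mathcal{C}_1) \geq I(\mathcal{C}_2 \cup \{b\}) - I(\mathcal{C}_2)$, which is precisely submodularity.

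I do not expect a genuine obstacle, since $I()$ is the standard probabilistic-coverage objective from the influence-maximization literature. The only point demanding care is performing the telescoping of the marginal gain \emph{at the level of an individual user's survival product} before summing over $\mathbb{U}$; carrying out the factorization on the whole sum directly would obscure the argument, whereas the clean per-user form $Pr(b,u_j) \cdot P_j(\mathcal{C})$ makes both the monotonicity of the product and the diminishing-returns comparison transparent.
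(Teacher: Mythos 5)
Your proof is correct and complete. One point of comparison to flag: the paper itself contains no proof of Lemma~\ref{Lemma:1} --- it explicitly defers the proofs of several lemmas and theorems, including this one, to an associated technical report --- so there is no in-paper argument to measure yours against. What you give is the standard argument for probabilistic-coverage objectives, and it is the right one: reducing everything to the per-user survival product $P_j(\mathcal{C})=\prod_{b_i \in \mathcal{C}}(1-Pr(b_i,u_j))$, getting non-negativity and monotonicity from each factor lying in $[0,1]$, and telescoping the marginal gain to the closed form $I(\mathcal{C}\cup\{b\})-I(\mathcal{C})=\sum_{u_j}Pr(b,u_j)\,P_j(\mathcal{C})$, from which diminishing returns follows because $P_j$ decreases under set inclusion. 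Your insistence on doing the telescoping per user before summing is exactly the clean way to do it. Note also that your closed form is consistent with, and in fact justifies, how the paper later uses the lemma: the marginal gain $\Delta(b \mid \mathcal{C})$ of Definition~\ref{Def:MIG} and the inequality $I(x \mid \mathcal{C}) \geq I(x \mid \mathcal{BS}\setminus x)$ invoked in Section~\ref{Sec:PSG} when motivating the pruned submodularity graph both follow immediately from the expression $\sum_{u_j}Pr(b,u_j)\,P_j(\mathcal{C})$ together with the monotone decrease of $P_j$.
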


Since the selection of billboard slots involves a monetary cost, only a minimal number of them can be chosen. In any commercial campaigns, the objective is to maximize the views within the budget constraints. This raises the query of which $k$ billboard slots should be selected to maximize the views (i.e., influence). This problem is known as the Influential Billboard Slot Selection \textsl{IBSS} problem, as stated in Definition \ref{Def:Problem}.  
 
\begin{definition} [\textsl{IBSS} Problem]\label{Def:Problem}
Given a set of billboard slots $\mathcal{BS}$, a set of users $\mathcal{U}$, with the influence probability $Pr(b_i,u_j)$ for all $b_i \in \mathcal{BS}$ and $u_j \in \mathcal{U}$, IBSS Problem asks to find a subset $\mathcal{C} \subseteq \mathcal{BS}$ such that $|\mathcal{C}| = k$ and $I(\mathcal{C})$ is maximized as stated in Equation \ref{Eq:1}. Mathematically, this problem can be posed using Equation \ref{Eq:Problem}.
  
  \begin{equation} \label{Eq:Problem}
  \mathcal{C}^{OPT}= \underset{\mathcal{C} \ \subseteq \ \mathbb{BS}, \ | \mathcal{C}|=k}{argmax} \ I(\mathcal{C})
  \end{equation}
\end{definition} 
  
From the computational perspective, \textsf{IBSS Problem} has been written in the following  text box.
\begin{center}
\begin{tcolorbox}[title=\textsl{\textsl{IBSP} Problem}, width=8.5cm]
\textbf{Input:} Billboard Slots $\mathcal{BS}$, Trajectory Database $\mathbb{D}$, Budget $k$, Influence Function $I()$.

\textbf{Problem:} Find out $\mathcal{C} \subseteq \mathcal{BS}$ for $|\mathcal{C}|=k$ such that $I(\mathcal{C})$ is maximum.
\end{tcolorbox}
\end{center}    
  
We take any instance of the \textsl{IBSP} Problem $I=(\mathbb{BS}, \mathcal{C},k)$, and show that the \textsl{IBSP} Problem is \textsf{NP-Hard} in Theorem \ref{Th:1} by reducing it to Set Cover Problem. Due to space limitations proofs of many lemmas and theorems are omitted here which are available in the associated technical report with this paper. 

  \begin{theorem} \label{Th:1}
  Given a trajectory database $\mathbb{D}$, and billboard database $\mathbb{B}$, the IBSS Problem is \textsf{NP-hard} and hard to approximate within any constant factor.
  \end{theorem}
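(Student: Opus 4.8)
The plan is to establish both claims through a single polynomial-time reduction from the \textsf{Set Cover} problem, exploiting the fact that the influence function of Equation~\ref{Eq:1} degenerates to a plain coverage count when the influence probabilities are restricted to $\{0,1\}$. First I would set up the reduction. Given a \textsf{Set Cover} instance consisting of a universe $U=\{e_1,\dots,e_n\}$, a family of subsets $S_1,\dots,S_m\subseteq U$, and an integer $k$, I construct an \textsf{IBSS} instance by introducing one user $u_j\in\mathcal{U}$ for each element $e_j$, one billboard slot $b_i\in\mathcal{BS}$ for each subset $S_i$, and setting $Pr(b_i,u_j)=1$ whenever $e_j\in S_i$ and $Pr(b_i,u_j)=0$ otherwise. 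This construction is clearly computable in time polynomial in $n$ and $m$, and the budget $k$ is carried over unchanged.

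The second step is to verify that the reduction is faithful. For any $\mathcal{C}\subseteq\mathcal{BS}$ and any user $u_j$, the product $\prod_{b_i\in\mathcal{C}}(1-Pr(b_i,u_j))$ equals $0$ precisely when some selected slot $b_i$ satisfies $e_j\in S_i$, and equals $1$ otherwise; hence the bracketed term in Equation~\ref{Eq:1} is exactly the indicator that $e_j$ lies in the union of the selected subsets. Summing over users gives $I(\mathcal{C})=\big|\bigcup_{b_i\in\mathcal{C}}S_i\big|$, so the influence is literally the number of elements covered. Because $I(\cdot)$ is monotone (Lemma~\ref{Lemma:1}), the constraint $|\mathcal{C}|=k$ is without loss of generality interchangeable with $|\mathcal{C}|\le k$: padding a covering subfamily with spare slots does not decrease coverage. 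Thus the chosen $k$ subsets cover all of $U$ if and only if the corresponding $k$ slots achieve $I(\mathcal{C})=n$, and since deciding full coverage is \textsf{NP}-complete, the \textsf{NP}-hardness of \textsf{IBSS} follows immediately.

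For the inapproximability claim the same reduction passes the approximation gap through: since $I(\mathcal{C})$ coincides with the cardinality of the covered set, any $\alpha$-approximation for \textsf{IBSS} on the constructed instance yields an $\alpha$-approximation for the coverage objective, so the known hardness-of-approximation results for coverage-type problems transfer directly. Here I would invoke the standard gap results rather than re-deriving them, treating the reduction above as the structural bridge.

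The step I expect to be the main obstacle is pinning down the \emph{precise} inapproximability threshold being asserted. By Lemma~\ref{Lemma:1} the objective is monotone submodular, and monotone submodular maximization under a cardinality constraint admits the greedy $(1-\tfrac{1}{e})$ guarantee; consequently the defensible statement for the maximization formulation is that no polynomial-time algorithm can beat the factor $(1-\tfrac{1}{e})$ unless $\mathsf{P}=\mathsf{NP}$, whereas the stronger ``any constant factor'' hardness is the one inherited from the minimum-cardinality (\textsf{Set Cover}) formulation via Feige's $\Omega(\ln n)$ bound. I would therefore write the reduction so that it makes explicit which of the two formulations each hardness bound refers to, to avoid conflating the tight $(1-\tfrac{1}{e})$ maximization barrier with the logarithmic minimization barrier.
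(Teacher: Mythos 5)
Your reduction matches the approach the paper itself announces: the paper states that Theorem~\ref{Th:1} is proved ``by reducing it to Set Cover'' (the direction is stated backwards there, but the intent is the standard reduction you give) and defers the details to a technical report, so your construction --- one user per universe element, one slot per subset, $Pr(b_i,u_j)\in\{0,1\}$, under which $I(\mathcal{C})$ collapses to the coverage count via Equation~\ref{Eq:1} --- is exactly the structural bridge the paper has in mind, and your \textsf{NP}-hardness argument through it is sound. Your closing caveat is not merely a stylistic worry but the substantive point: the ``hard to approximate within any constant factor'' clause, read literally for the maximization objective, is contradicted by the paper's own machinery, since Lemma~\ref{Lemma:1} makes $I(\cdot)$ monotone submodular and the paper invokes the Nemhauser--Wolsey--Fisher $(1-\frac{1}{e})$ greedy guarantee (a constant factor) for precisely this problem. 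So the only defensible readings are the two you separate: no polynomial-time algorithm beats $(1-\frac{1}{e})$ for the cardinality-constrained maximization unless $\mathsf{P}=\mathsf{NP}$ (inherited from maximum coverage), or the $\Omega(\ln n)$ Feige bound for the minimum-cardinality covering formulation. Making that split explicit, as you propose, repairs a genuine imprecision in the theorem statement rather than introducing a gap in the proof.
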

    
 
 \section{Proposed Solution Methodologies}  \label{Sec:PA}
In this section we discuss the proposed solutions for addressing the \textsf{IBSS Problem}. The first approach involves leveraging the submodular function maximization technique, specifically the pruned submodularity graph. The second one performs spatial clustering on the top of submodular function maximization.

\subsection{Submodularity Graph Based Approach} \label{Sec:PSG}
In Definition \ref{Def:MIG} we describe the marginal gain of a slot, which is the starting point for addressing this problem. Given a subset of billboard slots $\mathcal{C} \subseteq \mathcal{BS}$ and a billboard slot $b \in \mathcal{BS}$ its marginal gain is stated in Definition \ref{Def:MIG}. 

\begin{definition} [Marginal Influence Gain of a Billboard Slot] \label{Def:MIG}
For any subset of billboard slots $\mathcal{C} \ \subseteq \ \mathcal{BS}$, the marginal gain of a slot, $b \in \mathcal{BS} \setminus \mathcal{C}$ is  denoted as $\Delta (b| \mathcal{C})$, and mathematically, it is defined using Equation No. \ref{Eq:MG}. 
\begin{equation}\label{Eq:MG}
\Delta (b \ | \ \mathcal{C}) = I(\mathcal{C} \ \cup \ \{b\}) \ - \ I(\mathcal{C})
\end{equation}

Here, $\Delta (b| \mathcal{C})$ represents the difference in influence obtained when the billboard slot $b$ is included in $\mathcal{C}$ and when $b$ is not included with $\mathcal{C}$. 
\end{definition}

According to Lemma \ref{Lemma:1}, influence function $I()$ holds the  submodularity property. Therefore, we can pose the $\textsf{IBSS Problem}$  into a submodular function maximization problem with cardinality constraint. One traditional approach to solve submodular maximization problem is the incremental greedy approach based on marginal gain computation. For a given positive integer $k$, starting with an empty set and in each iteration we select a billboard slot that causes maximum marginal gain. According to Nemhauser et al. \cite{nemhauser1978analysis,fisher1978analysis}, this method gives $(1-\frac{1}{e})$-factor approximation guarantee. However, this method requires $\mathcal{O}(n)$ computation of marginal gain in each iteration, where $|\mathcal{BS}|=n$, leading to significant computational burdens for real-world problems. To solve this issue, an efficient approach has been proposed that uses a combinatorial optimization called pruned submodularity graph \cite{zhou2017scaling,10.1007/978-3-031-22064-7_17}. In this paper, we adopt this technique to solve the IBSS Problem. In Definition \ref{Def:PS_Graph}, we define the Pruned Submodularity Graph.

 \begin{definition}  [Pruned Submodularity Graph (PSG)] \label{Def:PS_Graph}
Pruned Submodularity graph is a directed, weighted graph $(V,E,\mathcal{W})$, where the vertex set of $G$, i.e., $V(G)$ contains the set of billboard slots $\mathcal{BS}$, and between each pair of slots there is a directed edge $e = (b_x \rightarrow b_y)  \in E(G)$ has weight $\mathcal{W}$ and it can be defined in Equation \ref{Eq:weight}.
\begin{equation} \label{Eq:weight}
\mathcal{W}_{b_x \rightarrow b_y}= I(b_y \ \vert \ b_x) \ - \ I(b_x \ \vert \ \mathcal{BS} \setminus b_x)
\end{equation}

Here, $(b_x,b_y)$ denotes any pair of billboard slot and the edge set $E(G)=\{(b_x,b_y):\ x,y \in [n] \text{ and } x \neq y\}$.
\end{definition}

According to the Algorithm \ref{Algo: Algorithm1_Simple Greedy}, the weight of any edge $(b_xb_y)$, i.e., $\mathcal{W}_{b_x \rightarrow b_y}$, computes the net loss during maximizing $I()$ on the reduced set $\mathcal{C}^{'} \subseteq \mathcal{BS}$ with $b_y$ removed and $b_x$ retained. In Equation \ref{Eq:weight}, $I(b_y \ \vert \ b_x)$ denotes maximum possible influence billboard slot $b_y$ can be offer while slot $b_x$ is already given. On the other hand, $I(b_x \ \vert \ \mathcal{BS} \setminus b_x)$ denotes the minimal influence slot $b_x$, can contribute to the solution $\mathcal{C}$ as $I()$ function holds by submodularity i.e, $I(x \vert \mathcal{C}) \geq I(x \vert \mathcal{BS} \setminus x)$. Therefore, a minimal $I(b_y \vert b_x)$ shows that $b_y$ is not so important while a larger value of  $I(b_x  \vert \mathcal{BS} \setminus b_x)$ implies that $b_x$ is always important. In this paper rest of our discussions we will use term `slot' in place of  `vertex' and vice versa. Next, in Definition \ref{Def:Divergence}, we define slots divergence in context of \textsl{PSG}.

\begin{definition}  [Divergence of a Billboard Slot] \label{Def:Divergence}
Given a pruned submodularity graph $G(V,E,\mathcal{W})$, the divergence of a billboard slot $b_y \in \mathcal{C}$ from the reduced ground set $\mathcal{C}^{'}$ can be defied as, $\mathcal{W}_{\mathcal{C}^{'}b_y} = \underset{x \in \mathcal{C}^{'}}{min} \ \mathcal{W}_{xy}$.
\end{definition}

\par Now, we describe the proposed solution approach. Our proposed solution methodology involves a preprocessing task that removes billboard slots with an individual influence of $0$ from the list. Next, in Line No. $8$ we generate the Pruned Submodularity Graph with the remaining billboard slots. In Line No. $10$ each iteration of the \texttt{while} loop, we randomly pick $h \cdot \log n$ many slots and put them into $\mathbb{U}$. We then compute the value of $\mathcal{W}_{\mathbb{U}d}$ for all remaining billboard slots in $\mathbb{BS}$, where $\mathcal{W}_{\mathbb{U}d}$ is defined in Definition \ref{Def:Divergence}. We remove a fraction of $(1-\frac{1}{\sqrt{l}})$ many slots having the smallest value of $\mathcal{W}_{\mathbb{U}d}$ from the list. Once the pruning step is complete, we use the incremental greedy approach to select $k$ billboard slots from the reduced ground set. Algorithm \ref{Algo: Algorithm1_Simple Greedy} describes this process in the form of pseudocode. Subsequently, we analyze Algorithm \ref{Algo: Algorithm1_Simple Greedy} to understand its time and space requirement.

 \begin{algorithm}[h]
\SetAlgoLined
\KwData{$\mathcal{BS}$, $\mathbb{D}$, $h$, $I()$, $l$, and, $k$}
\KwResult{ $\mathbb{X} \subseteq \mathcal{BS}$, $\ \text{such that} \ |\mathbb{X}|= k$ }
 Initialize a subset $\mathbb{X} \leftarrow \emptyset$, $\ \mathcal{\overline{BS}} \leftarrow \mathcal{BS}$\;
 \For{$\text{All }d \in \mathcal{\overline{BS}}$}{
 \If{$I(d)==0$}{
 $\mathcal{\overline{BS}} \longleftarrow \mathcal{\overline{BS}} \setminus \{d\}$\;
 }
 }
 Initialize $\mathcal{\overline{S}} \leftarrow \emptyset$, $|\mathcal{\overline{BS}}|=n$, $\mathbb{U} \longleftarrow \emptyset$ \;
 $\text{Generate \textsl{PSG} with }\mathcal{\overline{BS}}$\;
 \While{$|\mathcal{\overline{BS}}| > \ h \cdot \log n$}{
 $\text{Pick }h \cdot \log n \text{ slots randomly from }\mathcal{\overline{BS}}  \text{ and store in }\mathbb{U}$\; 
 $\mathcal{\overline{BS}} \longleftarrow \mathcal{\overline{BS}} \setminus \mathbb{U}$, $\mathcal{\overline{S}} \longleftarrow \mathcal{\overline{S}} \cup \mathbb{U} $\;
 \For{$\text{All }d \in \mathcal{\overline{BS}} $}{
 $\mathcal{W}_{\mathbb{U}d} \longleftarrow \ \underset{u \in \mathbb{U}}{min} \ [I(d|u)-I(u| \mathcal{\overline{BS}} \setminus \{u\})]$\;
 }
 $\text{Take out }(1-\frac{1}{\sqrt{\ell}}) \cdot |\mathcal{\overline{BS}}| \text{ elements from }\mathcal{\overline{BS}} \text{ having smallest }\mathcal{W}_{\mathbb{U}d}$\;
 }
 $\mathcal{\overline{BS}} \longleftarrow \mathcal{\overline{BS}} \  \bigcup \ \mathcal{\overline{S}} $\;
 \While{$|\mathbb{X}| \neq k$}{
 $d^{*} \longleftarrow \underset{d \in \mathcal{\overline{BS}} \setminus \mathbb{X}}{argmax} \ I(\mathbb{X} \cup \{d\}) - I(\mathbb{X})$\;
 $\mathbb{X} \longleftarrow \mathbb{X} \cup \{d^{*}\}$\;
 }
Return $\mathbb{X}$\;
 \caption{ \textsl{PSG} + Incremental Greedy approach for \textsl{IBSP} Problem}
 \label{Algo: Algorithm1_Simple Greedy}
\end{algorithm}	
\par The initialization process of Line No. $1$  will take  $\mathcal{O}(1)$ time. Next computing influence of any arbitrary billboard slots will take $\mathcal{O}(t)$ time if there are $t$ tuples in the trajectory database. In the worst case if there is $n$ number of billboard slots then total time required for Line $2$ to $6$ is $\mathcal{O}(n\cdot t)$. Line No. $7$ initialization statement will take $\mathcal{O}(1)$ time. Now, in Line $8$ graph is constructed for pruning and in the worst case if we consider that there is $n$ number of non zero billboard slots then there is $\mathcal{O}(n^{2})$ billboard slot pairs. From Equation \ref{Eq:weight}, it is clear that computing edge weight will take for any pair of slots will take $\mathcal{O}(n \cdot t)$ time.  Hence construction of the PSG having $\mathcal{O}(n^{2})$ edges will take $\mathcal{O}(n^{3}\cdot t)$ time. It can be observed that the \texttt{while loop} of Line $9$ will execute for $\mathcal{O}(\log_{\sqrt{\ell}} n)$ times. Also, sampling $\mathcal{O}(\log n)$ many elements randomly from $\mathcal{\overline{BS}}$ will take $\mathcal{O}(\log n)$ time. For every $u \in \mathbb{U}$, the required computations of Line $13$ will take $\mathcal{O}(1)$ time. The reason here is that we can reuse the computations which we have done for computing the corresponding edge weights of the pruned submodularity graph. Considering $\mathcal{O}(\log n)$ many elements in $\mathbb{U}$, for one billboard slot computation of Line $13$  will take $\mathcal{O}(\log n)$ time. As there are $\mathcal{O}(n)$ billboard slots, hence execution from Line $12$ to $14$ will take $\mathcal{O}(n\cdot\log n)$ time. Sorting these weight values will take $\mathcal{O}(n\cdot \log n)$ time. In the worse case, removing $(1-\frac{1}{\sqrt{\ell}})$ fraction elements will take $\mathcal{O}(n)$ time. Hence, at each iteration \texttt{while loop} will take  $\mathcal{O}(\log n + n \cdot \log n + n)=\mathcal{O}(n \cdot \log n)$. So, Line No. $9$ to $16$ will take $\mathcal{O}(n \cdot \log n \cdot \log_{\sqrt{\ell}} n)=\mathcal{O}(n \cdot \log^{2} n)$ times as the \texttt{while loop} will execute for $\mathcal{O}(\log_{\sqrt{\ell}} n)$ many times. After completion of the \texttt{while loop} the size of $\mathcal{\overline{BS}}$ will be of $\mathcal{O}(\log^{2}n)$. Choosing $k$ elements from the reduced ground set will take $\mathcal{O}(k \cdot t \cdot \log^{2}n)$. Hence, the total time requirement for executing Algorithm \ref{Algo: Algorithm1_Simple Greedy} will be $\mathcal{O}(n \cdot t + n^{3} \cdot t+ n \cdot \log^{2}n+k \cdot t \cdot \log^{2}n)=\mathcal{O}( n^{3} \cdot t+ n \cdot \log^{2}n+k \cdot t \cdot \log^{2}n)$. Extra space requirement for Algorithm \ref{Algo: Algorithm1_Simple Greedy} is as follows. To store the adjacency matrix of the pruned submodularity graph will take $\mathcal{O}(n^{2})$ space. Additional space needed to store $\mathbb{S}$, $\mathbb{U}$, and $\mathbb{S}^{'}$ is $\mathcal{O}(n)$, $\mathcal{O}(\log n)$, and $\mathcal{O}(n)$ respectively. Storing the intermediate computations while computing the edge weights of the pruned submodularity graph will take $\mathcal{O}(n)$ space. Hence, the total space requirement will be of $\mathcal{O}(n^{2})$. Hence, Theorem \ref{Th:Time_and_Space} holds.



\begin{theorem} \label{Th:Time_and_Space}
Algorithm \ref{Algo: Algorithm1_Simple Greedy} takes $\mathcal{O}( n^{3} \cdot t+ n \cdot \log^{2}n+k \cdot t \cdot \log^{2}n)$ for time and $\mathcal{O}(n^{2})$ for space to execute.
\end{theorem}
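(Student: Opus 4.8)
The plan is to establish both bounds by a line-by-line amortized cost analysis of Algorithm~\ref{Algo: Algorithm1_Simple Greedy}, partitioning the work into four phases: (i) the zero-influence pre-processing filter (Lines~2--6), (ii) the construction of the pruned submodularity graph (Line~8), (iii) the iterative pruning \texttt{while} loop (Lines~9--16), and (iv) the final incremental-greedy selection (Lines~17--21). Since the dominant term will come from phase (ii), I would first pin down the cost of a single influence evaluation: by Equation~\ref{Eq:1}, evaluating $I(\mathcal{C})$ amounts to one product over the slots of $\mathcal{C}$ per trajectory tuple, hence $\mathcal{O}(|\mathcal{C}| \cdot t)$ time on a database of $t$ tuples. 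In particular an evaluation over the full ground set costs $\mathcal{O}(n \cdot t)$, and this is the unit cost I will carry through the remaining argument.

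For phase (i) I would charge $\mathcal{O}(t)$ per slot to test $I(d)=0$, giving $\mathcal{O}(n \cdot t)$ overall. For phase (ii), the key observation is that each of the $\mathcal{O}(n^2)$ directed edges carries the weight of Equation~\ref{Eq:weight}, whose term $I(b_x \mid \mathcal{BS}\setminus b_x)$ is a marginal gain taken against the full ground set and therefore costs $\mathcal{O}(n \cdot t)$ to evaluate; multiplying by the edge count yields the $\mathcal{O}(n^3 \cdot t)$ term that dominates the final bound. The essential efficiency claim that makes the loop cheap is that the per-pivot quantity $I(d\mid u) - I(u \mid \mathcal{\overline{BS}}\setminus \{u\})$ computed in Line~13 has already been materialized as an edge weight during phase (ii); I would argue that, with these weights stored in the adjacency matrix, the divergence $\mathcal{W}_{\mathbb{U}d}$ of Definition~\ref{Def:Divergence} is just a minimum over the $\mathcal{O}(\log n)$ pivots in $\mathbb{U}$, costing $\mathcal{O}(\log n)$ per surviving slot and $\mathcal{O}(n \log n)$ per iteration.

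To finish phase (iii), I would bound the iteration count: each pass discards a $(1-1/\sqrt{\ell})$ fraction, so the loop runs $\mathcal{O}(\log_{\sqrt{\ell}} n) = \mathcal{O}(\log n)$ times, and since the per-iteration cost (sampling, divergence computation, sorting, and removal) is $\mathcal{O}(n \log n)$, the phase totals $\mathcal{O}(n \log^2 n)$. For phase (iv) the crucial structural fact is that the set $\mathcal{\overline{S}}$ restored in Line~17 has accumulated $\mathcal{O}(\log n)$ pivot batches of size $\mathcal{O}(\log n)$ each, so the reduced ground set entering the selection loop has size $\mathcal{O}(\log^2 n)$; running $k$ greedy rounds, each evaluating $\mathcal{O}(\log^2 n)$ marginal gains at $\mathcal{O}(t)$ apiece, gives $\mathcal{O}(k \cdot t \cdot \log^2 n)$. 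Summing the four phases and absorbing $\mathcal{O}(n \cdot t)$ into $\mathcal{O}(n^3 \cdot t)$ yields the stated time bound. The space bound then follows by observing that the adjacency matrix is the only super-linear structure at $\mathcal{O}(n^2)$, while $\mathcal{\overline{S}}$, $\mathbb{U}$, the restored set, and the cached marginal values are each $\mathcal{O}(n)$ or smaller.

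The main obstacle I expect is justifying the $\mathcal{O}(1)$-amortized reuse in Line~13 rigorously: it hinges on showing that the pruning loop never needs a marginal gain that was not already computed as a PSG edge weight, and that lookups into the stored adjacency structure are genuinely constant-time. A secondary subtlety is the $\mathcal{O}(\log^2 n)$ bound on the residual ground set, which depends on correctly tracking that $\mathcal{\overline{S}}$ hoards the sampled pivots across all iterations rather than the geometrically shrinking $\mathcal{\overline{BS}}$; conflating the two would incorrectly suggest an $\mathcal{O}(\log n)$ residual and understate the last two terms of the running-time bound.
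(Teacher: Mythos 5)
Your proposal is correct and follows essentially the same route as the paper's own analysis: the identical four-phase decomposition, the same $\mathcal{O}(n \cdot t)$-per-edge-weight accounting that makes PSG construction the dominant $\mathcal{O}(n^{3} \cdot t)$ term, the same reuse of cached edge weights to make the divergence computation in the pruning loop cost $\mathcal{O}(n \log n)$ per iteration over $\mathcal{O}(\log_{\sqrt{\ell}} n)$ iterations, the same $\mathcal{O}(\log^{2} n)$ bound on the residual ground set obtained by tracking the accumulated pivot set $\mathcal{\overline{S}}$ rather than the shrinking $\mathcal{\overline{BS}}$, and the same adjacency-matrix argument for the $\mathcal{O}(n^{2})$ space bound. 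The subtleties you flag (constant-time edge-weight lookup and not conflating $\mathcal{\overline{S}}$ with $\mathcal{\overline{BS}}$) are exactly the points the paper's proof relies on, so no changes are needed.
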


\par Now, we discuss the performance guarantee of Algorithm \ref{Algo: Algorithm1_Simple Greedy} which directly follows from the study by Zhou et al. \cite{zhou2017scaling} and presented in Theorem \ref{Zohu_Result}.

\begin{theorem} \cite{zhou2017scaling} \label{Zohu_Result}
Given a set of billboard slots $\mathcal{\overline{BS}}$, if we apply pruned submodularity graph on $\mathcal{\overline{BS}}$ then the size of the reduced output $\mathcal{BS}^{'}$ in Algorithm \ref{Algo: Algorithm1_Simple Greedy} will be $|\mathcal{BS}^{'}|=\frac{\ell p}{\log \sqrt{\ell}} \cdot k \cdot \log^{2}n$ with very high probability i.e., $n^{1-qp} \cdot \log_{\sqrt{\ell}}n$, $\forall \ b \in \mathcal{\overline{BS}} \setminus \mathcal{BS}^{'}$, $\mathcal{W}_{\mathcal{BS}^{'}b} \leq 2 \cdot \mathcal{W}_{\mathcal{BS}^{*}}$. Thus the greedy approach on $\mathcal{\overline{BS}}$ generate a solution $\mathcal{C}^{'}$ such that
\begin{equation} \label{Eq:Approximation}
I(\mathcal{\mathcal{C}}^{'}) \geq (1-\frac{1}{e}) \cdot (I(\mathcal{\mathcal{C}}^{OPT})- 2k\epsilon)
\end{equation}
where $\mathcal{C}^{OPT}$ is the optimal solution of size $k$. 
\end{theorem}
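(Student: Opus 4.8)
The plan is to prove the theorem in three stages matching its three assertions: the size bound on the reduced ground set $\mathcal{BS}'$, the high-probability divergence guarantee $\mathcal{W}_{\mathcal{BS}'b}\le 2\mathcal{W}_{\mathcal{BS}^*}$ for every pruned slot $b$, and the approximation bound of Equation \ref{Eq:Approximation}. Throughout I would use that $I()$ is non-negative, monotone, and submodular (Lemma \ref{Lemma:1}), and that the edge weights of the \textsl{PSG} encode exactly the quantities appearing in Definitions \ref{Def:PS_Graph} and \ref{Def:Divergence}; the argument then adapts the pruning analysis of Zhou et al. \cite{zhou2017scaling} to our influence setting.

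For the size bound, I would track $|\mathcal{\overline{BS}}|$ across iterations of the first \texttt{while} loop. Each round discards a $(1-\tfrac{1}{\sqrt{\ell}})$ fraction of the survivors, so the surviving set contracts geometrically by the factor $\tfrac{1}{\sqrt{\ell}}$ and the loop runs for $\mathcal{O}(\log_{\sqrt{\ell}} n)=\mathcal{O}\!\big(\tfrac{\log n}{\log\sqrt{\ell}}\big)$ rounds. Each round transfers the $h\cdot\log n$ sampled slots into the accumulator $\mathcal{\overline{S}}$, so summing the transfers over all rounds gives $|\mathcal{\overline{S}}|=\Theta\!\big(\tfrac{h}{\log\sqrt{\ell}}\log^2 n\big)$, while the leftover of size $<h\log n$ is lower order. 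Substituting the algorithm's choice $h=\Theta(\ell p k)$ yields $|\mathcal{BS}'|=\tfrac{\ell p}{\log\sqrt{\ell}}\cdot k\cdot\log^2 n$.

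The crux is the divergence guarantee. By construction a pruned slot has one of the smallest values of $\mathcal{W}_{\mathbb{U}d}$ in its round, so its divergence from the sampled set $\mathbb{U}$ is small; the work is to transfer this into a bound on its divergence $\mathcal{W}_{\mathcal{BS}'b}$ from the final retained set. I would fix an arbitrary slot and bound the probability that the uniformly random sample of $h\cdot\log n$ slots fails to serve as a faithful surrogate for the whole surviving set, i.e. fails to witness a near-minimal divergence. A Chernoff-type concentration estimate gives a per-slot, per-round failure probability of order $n^{-qp}$; a union bound over the $\mathcal{O}(n)$ slots and $\mathcal{O}(\log_{\sqrt{\ell}} n)$ rounds produces the stated overall failure probability $n^{1-qp}\cdot\log_{\sqrt{\ell}} n$. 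On the complementary event, every pruned slot satisfies $\mathcal{W}_{\mathcal{BS}'b}\le 2\mathcal{W}_{\mathcal{BS}^*}$, where the factor $2$ absorbs the gap between the sampled minimizer and the true retained minimizer.

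Finally, I would feed the divergence bound into the standard greedy analysis for monotone submodular maximization under a cardinality constraint. The incremental greedy of the second \texttt{while} loop on $\mathcal{BS}'$ already attains the $(1-\tfrac{1}{e})$ factor against the best $k$-subset contained in $\mathcal{BS}'$; what remains is to account for any elements of the true optimum $\mathcal{C}^{OPT}$ that were pruned. Each of the at most $k$ such elements contributes a deficit of at most $\mathcal{W}_{\mathcal{BS}'b}\le 2\mathcal{W}_{\mathcal{BS}^*}$, and writing $\epsilon=\mathcal{W}_{\mathcal{BS}^*}$ the total loss is at most $2k\epsilon$, which combined with the greedy factor yields $I(\mathcal{C}')\ge(1-\tfrac{1}{e})(I(\mathcal{C}^{OPT})-2k\epsilon)$. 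I expect the probabilistic surrogate argument of the third paragraph to be the main obstacle, since it is the single place where submodularity and the concentration of the random sample must be combined quantitatively; the size and approximation steps sit on top of it as essentially bookkeeping.
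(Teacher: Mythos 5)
You should first note what the paper actually provides here: nothing. Theorem \ref{Zohu_Result} is imported verbatim, with citation, from Zhou et al.\ \cite{zhou2017scaling} (``directly follows from the study by Zhou et al.''), so there is no in-paper proof to compare against; your sketch can only be measured against the argument of that source, whose three-part structure (size bookkeeping, high-probability divergence guarantee, substitution into the greedy analysis) your outline reproduces faithfully. Your size accounting is correct: $\mathcal{O}(\log_{\sqrt{\ell}} n)$ rounds, $h\log n$ slots banked per round, giving $\frac{h}{\log\sqrt{\ell}}\log^{2}n$ retained slots, and your reading $h=\Theta(\ell p k)$ is a fair repair of a real inconsistency, since the paper fixes $h=\ell=8$ and never defines $p$ or $q$, so the stated bound $\frac{\ell p}{\log\sqrt{\ell}}\cdot k\cdot\log^{2}n$ cannot be produced by Algorithm \ref{Algo: Algorithm1_Simple Greedy} as written. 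Your closing step (replace each pruned member of $\mathcal{C}^{OPT}$ by a retained surrogate at a cost of at most $2\epsilon$ each, then invoke the Nemhauser et al.\ $(1-\frac{1}{e})$ bound on the reduced ground set) is the standard and correct way to obtain Equation \ref{Eq:Approximation}.

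The genuine gap is exactly where you predicted it, but the tool you propose there is the wrong one, in two respects. First, for a fixed slot $d$ the quantity $\mathcal{W}_{\mathbb{U}d}=\min_{u\in\mathbb{U}}\mathcal{W}_{u\rightarrow d}$ is a minimum over the sample, not a sum or empirical mean, so a ``Chernoff-type concentration estimate'' has nothing to concentrate; what Zhou et al.\ actually use is an elementary hitting (order-statistics) argument --- a uniform sample of $h\log n$ slots intersects the bottom quantile of $\{\mathcal{W}_{u\rightarrow d}\}_u$ except with probability of the form $(1-p)^{h\log n}\approx n^{-qp}$ --- after which your union bound over $\mathcal{O}(n)$ slots and $\log_{\sqrt{\ell}}n$ rounds does give the stated failure probability. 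Second, your explanation of the factor $2$ (``absorbs the gap between the sampled minimizer and the true retained minimizer'') is backwards: since $\mathbb{U}$ is banked into $\mathcal{\overline{S}}$ and hence into $\mathcal{BS}'$, we have $\mathcal{W}_{\mathcal{BS}'b}\leq\mathcal{W}_{\mathbb{U}b}$ for free, so there is no such gap to absorb. The factor $2$ instead relates the pruned slots' divergence to that of the \emph{optimal} elements ($\mathcal{W}_{\mathcal{BS}^{*}}$), and it is obtained by chaining through the sampled witness using a structural lemma your sketch never states: the PSG weights inherit from submodularity a directed triangle inequality, roughly $\mathcal{W}_{u\rightarrow v}\leq\mathcal{W}_{u\rightarrow x}+\mathcal{W}_{x\rightarrow v}$, built on $I(v\,\vert\,u)\leq I(v\,\vert\,x)+I(x\,\vert\,u)$. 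Without that lemma the sampled minimizer controls divergence only relative to $\mathbb{U}$, and no amount of concentration transfers it into the comparison with $\mathcal{W}_{\mathcal{BS}^{*}}$ that Equation \ref{Eq:Approximation} needs.
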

 \par The above analysis shows a trade off between some parameter like the approximation bound, size of $\mathcal{BS}^{'}$, computational cost using $\epsilon$. From Equation \ref{Eq:Approximation}, it is clear that if $\epsilon$ is smaller then optimal solutions size become larger. In Algorithm \ref{Algo: Algorithm1_Simple Greedy}, two parameter $\ell$ and $h$ controls the efficiency. The parameter $\ell$ controls the shrinking rate of billboard slots at each iteration while $h$  controls the size of output billboard slot set i.e., $\mathcal{BS}^{'}$. Hence, it is very important to choose the right values for $h$ and $\ell$. In the previous study by Zhou et al. \cite{zhou2017scaling}, uses both $h$ and $\ell$ as $8$ and they have showed that for this parameter setting pruning method converge quickly. So, in our proposed approach we also consider the same parameter setting for $h$ and $\ell$.

\subsection{Spatial Clustering Approach}
Though the method described in Section \ref{Sec:PSG} provides an approximation guarantee and requires much less computational time compared to the incremental greedy approach, still this is not adequate to apply this approach when the billboard and trajectory database size is huge. To address this issue, in this section, we propose a spatial partitioning approach which contains three steps:
\begin{itemize}
\item The first step is to partition the billboard slot set $\mathcal{BS}$ into a set of clusters as per the influence overlap.
\item The next step is to find a threshold by calculating the average influence value of all clusters. Based on the threshold value, some of the clusters are pruned.
\item From the remaining clusters, local influential billboard slots are aggregated. Finally, Pruned Submodularity Graph+Incremental Greedy approach is applied to generate the global solution.
\end{itemize}

\paragraph{\textbf{Partition-based Framework}} Most of the existing solutions for the IBSS Problem does not consider the distance from the billboard as a parameter of influence. However, in this study we consider a distance factor as well. We assume that within distance if a trajectory come near to a  billboard location then the trajectory must be influenced by that particular billboard slot with some probability. In our datasets we observe that there is a small overlap between billboard slots of different locations to their influenced trajectories as most of the trajectories move in between $5$ kms and the same observation also reported by Zhang et. al. \cite{zhang2021minimizing} in their study. Therefore, we partition billboard slots into several small clusters and take locally maximum influential billboard slots from each cluster and merge them to get global solutions for selecting influential billboard slots. As the size of each cluster is smaller than the ground set ($\mathcal{BS}$) of billboard slots, the computation cost reduces significantly and generating more or less same influence. In this regard, first we describe the notion of $\theta$-Partition in Definition \ref{Def:Partition}.

\begin{definition}  [Partition]\cite{zhang2020towards} \label{Def:Partition} 
A partition of billboard slots set $\mathcal{BS}$, is a set of clusters $\{{\mathcal{K}_1,\mathcal{K}_2,\ldots,\mathcal{K}_m}\}$, where $\mathcal{BS}= \mathcal{K}_1 \cup \mathcal{K}_2 \cup \ldots \cup \mathcal{K}_m$, such that $\forall$ x $\neq$ y, $\mathcal{K}_x \;\cap$ $\mathcal{K}_y = \emptyset$.
\end{definition}
Now, the question is how we can cluster the billboard slots. The easiest way to partition the slots randomly. However, random partitioning of the billboard slots scheme does not give satisfactory results as there exists significant influence overlap between clusters. To minimize the influence overlap between different clusters, we use the notion of overlap ratio \cite{zhang2020towards} as presented in Definition \ref{Def:Overlap Ratio}.

\begin{definition}  [Overlap Ratio] \label{Def:Overlap Ratio} For given any two clusters of billboard slots $\mathcal{K}_x$ and $\mathcal{K}_y$, the overlap ratio between $\mathcal{K}_x$ and $\mathcal{K}_y$ with respect to $\mathcal{K}_x$ is denoted by $\mathcal{V}_{x,y}$ and defined using Equation No. \ref{Eq:6}.
\begin{equation} \label{Eq:6}
\mathcal{V}_{x,y} = \ \underset{\forall \; \mathcal{S}_x \; \subseteq \;\mathcal{K}_x}{argmax} \  \frac{\sigma(\mathcal{S}_x|\mathcal{K}_y)}{I(\mathcal{S}_x)} 
\end{equation}
where, $\mathcal{S}_x$ is a subset of $\mathcal{K}_x$ and $\sigma$($\mathcal{S}_x|\mathcal{K}_y$) denotes the influence overlap between $\mathcal{S}_x$ and $\mathcal{K}_y$ and mathematically it can be written using Equation \ref{Eq:7}.
\begin{equation} \label{Eq:7}
\sigma(\mathcal{S}_x|\mathcal{K}_y) = I(\mathcal{S}_x) + I(\mathcal{K}_y) - I(\mathcal{S}_x \cup \mathcal{K}_y)
\end{equation}
\end{definition}
 
Our goal is to partition the billboard slots so that the overlap among clusters is minimized. To resolve this issue we introduce the notion of $\theta$-partitioning approach in which clusters are generated and the influence overlap between clusters are minimized. For a given overlap ratio we present $\theta$-partition in which $\theta$ is a parameter and can be varied by the user. Here, $\theta$ balances the clusters size and influence overlap between any two clusters. Now, we describe the $\theta$-partition in definition \ref{Def:Theta Partition}.

\begin{definition}  [$\theta$-Partition] \label{Def:Theta Partition} 
A partition of the billboard slots $\mathcal{P} =\{{\mathcal{K}_1, \mathcal{K}_2,\ldots,\mathcal{K}_m}\}$ is said to be $\theta$-partition if $\forall x,y \in [m]$ and the overlap ratio among the clusters $\{{\mathcal{K}_x, \mathcal{K}_y}\}$ is always less than $\theta$, where $\theta$ lies between $0$ and $1$.
\end{definition}

 \begin{lemma}\cite{zhang2020towards} \label{Lemma:2}
For a given subset of billboard slots, $\mathcal{Z} \subseteq \mathcal{BS}$ where $|\mathcal{Z}|=q$, and a $\theta$-partition $\{{\mathcal{K}_1, \mathcal{K}_2,\ldots,\mathcal{K}_m}\}$ of $\mathcal{BS}$, $I(\mathcal{Z}) \geq \frac{1}{2} \underset{\mathcal{Z}_{i} \in \mathcal{Z}}{\sum}I(\mathcal{Z}_{i})$, if $ q \in [m]$ and $q \leq (1 + \frac{1}{\theta})$. 
 \end{lemma}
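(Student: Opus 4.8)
The plan is to write $I(\mathcal{Z})$ as the total influence of its pieces minus their mutual overlap, dominate that overlap by a sum of \emph{pairwise} overlaps, control each pairwise overlap through the $\theta$-partition guarantee, and finally invoke the cardinality bound $q \le 1 + \frac{1}{\theta}$ to produce the factor $\frac12$. Throughout I treat the $\mathcal{Z}_i$ as the nonempty intersections $\mathcal{Z} \cap \mathcal{K}_i$ (equivalently, group the elements of $\mathcal{Z}$ by the cluster that contains them), so the pieces are pairwise disjoint, each satisfies $\mathcal{Z}_i \subseteq \mathcal{K}_i$, and there are $q$ of them. I would assume Lemma \ref{Lemma:1} (monotonicity and submodularity of $I(\cdot)$) together with Definitions \ref{Def:Overlap Ratio} and \ref{Def:Theta Partition}.

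First I would set $T := \sum_i I(\mathcal{Z}_i) - I(\mathcal{Z})$, so that the claim $I(\mathcal{Z}) \ge \frac12 \sum_i I(\mathcal{Z}_i)$ is equivalent to $T \le \frac12 \sum_i I(\mathcal{Z}_i)$. The first key step is to show $T \le \sum_{i<j} \sigma(\mathcal{Z}_i \mid \mathcal{Z}_j)$, i.e. that the joint overlap is dominated by pairwise overlaps. For the influence function of Equation \ref{Eq:1} this is transparent per user: writing $a_i^{(u)} = 1 - \prod_{b \in \mathcal{Z}_i}(1 - Pr(b,u))$ for the per-user influence of piece $i$, disjointness gives $I(\mathcal{Z}) = \sum_u [1 - \prod_i (1 - a_i^{(u)})]$, and a Bonferroni-type inclusion-exclusion truncation yields $1 - \prod_i(1 - a_i^{(u)}) \ge \sum_i a_i^{(u)} - \sum_{i<j} a_i^{(u)} a_j^{(u)}$. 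Summing over users and recognizing $\sum_u a_i^{(u)} a_j^{(u)}$ as exactly the overlap $\sigma(\mathcal{Z}_i \mid \mathcal{Z}_j)$ of two disjoint pieces yields the desired bound on $T$.

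The second key step converts each pairwise overlap into a $\theta$-estimate. Since $\mathcal{Z}_i \subseteq \mathcal{K}_i$, Definition \ref{Def:Overlap Ratio} applied with $\mathcal{S}_x = \mathcal{Z}_i$ gives $\sigma(\mathcal{Z}_i \mid \mathcal{K}_j) \le \mathcal{V}_{i,j} \cdot I(\mathcal{Z}_i) < \theta\, I(\mathcal{Z}_i)$, the last inequality because the partition is a $\theta$-partition. Monotonicity of the overlap in its second argument (a direct consequence of submodularity, since $\sigma(X \mid Y) = I(X) - [I(X \cup Y) - I(Y)]$ and the marginal gain $I(X \cup Y) - I(Y)$ is non-increasing in $Y$) upgrades this to $\sigma(\mathcal{Z}_i \mid \mathcal{Z}_j) \le \sigma(\mathcal{Z}_i \mid \mathcal{K}_j) < \theta\, I(\mathcal{Z}_i)$. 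Because $\sigma$ is symmetric the same bound holds with $I(\mathcal{Z}_j)$, so I would average the two valid bounds to obtain the symmetric estimate $\sigma(\mathcal{Z}_i \mid \mathcal{Z}_j) < \frac{\theta}{2}\,[\,I(\mathcal{Z}_i) + I(\mathcal{Z}_j)\,]$.

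Finally I would assemble the estimates: summing the symmetric bound over all $\binom{q}{2}$ pairs makes each $I(\mathcal{Z}_k)$ appear in exactly $q-1$ pairs, giving $T < \frac{\theta(q-1)}{2}\sum_k I(\mathcal{Z}_k)$, and the hypothesis $q \le 1 + \frac{1}{\theta}$ is precisely what forces $\theta(q-1) \le 1$, hence $T < \frac12 \sum_k I(\mathcal{Z}_k)$, which is the claim. I expect the main obstacle to be the first step, namely passing from the overlap of the whole union to a clean sum of pairwise overlaps each respecting the per-cluster overlap ratio: the naive telescoping bound via marginal gains introduces asymmetric multiplicities (the $j$-th added piece carries a factor up to $j-1$), and, crucially, subadditivity of the overlap in its second argument is \emph{false} for general monotone submodular functions, so it is essential to exploit the explicit product (coverage) form of $I(\cdot)$ rather than submodularity alone. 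The symmetrization in the third step is the other delicate point, since it is what makes the constant come out to exactly $\frac12$ rather than a quantity depending on how the clusters are ordered.
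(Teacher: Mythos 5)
Your proof is correct, and while it rests on the same two pillars as the paper's argument (the pairwise $\theta$-overlap estimate from Definitions \ref{Def:Overlap Ratio} and \ref{Def:Theta Partition}, and the cardinality bound $\theta(q-1)\le 1$), it aggregates them differently. The paper sorts the pieces in descending order of influence and telescopes: adding $\mathcal{Z}_j$ to the union of the first $j-1$ pieces loses at most $(j-1)\theta\, I(\mathcal{Z}_j)$, giving $I(\mathcal{Z}) \ge \sum_i I(\mathcal{Z}_i) - \theta\,[\,I(\mathcal{Z}_2)+2I(\mathcal{Z}_3)+\cdots+(q-1)I(\mathcal{Z}_q)\,]$; it then rewrites the bracket as a sum of suffix sums and bounds each suffix sum by a multiple of the average $M_{I(\mathcal{Z})}$ (valid precisely because of the sorting), arriving at $I(\mathcal{Z}) \ge \sum_i I(\mathcal{Z}_i) - \theta\,\frac{q(q-1)}{2}\,M_{I(\mathcal{Z})}$, after which $q \le 1+\frac{1}{\theta}$ finishes exactly as in your last step. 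Your symmetrized pairwise bound $\sigma(\mathcal{Z}_i \mid \mathcal{Z}_j) < \frac{\theta}{2}\,[\,I(\mathcal{Z}_i)+I(\mathcal{Z}_j)\,]$, summed over the $\binom{q}{2}$ unordered pairs, yields the identical quantity $\frac{\theta(q-1)}{2}\sum_i I(\mathcal{Z}_i) = \theta\,\frac{q(q-1)}{2}\,M_{I(\mathcal{Z})}$, so both routes give the same constant; yours simply replaces the sort-plus-suffix-sum bookkeeping with an order-free averaging over pairs, which is cleaner. The more substantive difference is your first step: you justify dominating the joint overlap by pairwise overlaps via the per-user Bonferroni truncation $1-\prod_i(1-a_i^{(u)}) \ge \sum_i a_i^{(u)} - \sum_{i<j} a_i^{(u)}a_j^{(u)}$, exploiting the explicit product form of Equation \ref{Eq:1}, whereas the paper asserts its telescoped chain directly from the $\theta$-partition property without comment. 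You are right that this step needs more than monotonicity and submodularity: overlap is not subadditive in its second argument for general monotone submodular functions (e.g., $I(S)=\min(|S|,2)$ with $X=\{x\}$, $Y_1=\{y_1\}$, $Y_2=\{y_2\}$ gives $\sigma(X\mid Y_1)=\sigma(X\mid Y_2)=0$ but $\sigma(X\mid Y_1\cup Y_2)=1$), so on the one point where the argument genuinely uses the structure of the influence function, your write-up is more careful than the paper's.
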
 

\begin{proof}
Consider, $\mathcal{Z} = \{\mathcal{Z}_{1},\mathcal{Z}_{2},\ldots,\mathcal{Z}_{q}\}$ and the billboard slots in $\mathcal{Z}$ are sorted in descending order according to their individual influence, i.e., $I(\mathcal{Z}_{1}) \geq I(\mathcal{Z}_{2}) \geq \ldots \geq I(\mathcal{Z}_{q})$. According to Definition \ref{Def:Theta Partition} given two slots $\mathcal{Z}_{x}$ and $\mathcal{Z}_{y}$, $\mathcal{Z}_{y}$ has at most $\theta$ percent of overlap with with the slots of $\mathcal{Z}_{x}$ i.e., $I(\mathcal{Z}_{x} \cup \mathcal{Z}_{y}) \geq I(\mathcal{Z}_{x}) + (1 - \theta)I(\mathcal{Z}_{y})$. We denote the average influence for all $\mathcal{Z}_{i} \in \mathcal{Z}$ as  $M_{I(\mathcal{Z})} = \frac{1}{q} \sum_{i=1}^q I(\mathcal{Z}_{i})$. Now, for all subset of $\mathcal{Z}$ we can write:\\
\begin{align}
I(\mathcal{Z}) & \geq I(\mathcal{Z}_{1}) + (1-\theta)I(\mathcal{Z}_{2})+ \ldots +[1-(q-1)\theta]I(\mathcal{Z}_{q}) \nonumber\\
               & = \sum_{i=1}^q I(\mathcal{Z}_{i}) - \theta[I(\mathcal{Z}_{2})+2I(\mathcal{Z}_{3}) + \ldots + (q-1)I(\mathcal{Z}_{q})]\nonumber\\
               & = \sum_{i=1}^q I(\mathcal{Z}_{i}) - \theta[\sum_{i=2}^q I(\mathcal{Z}_{i})+\sum_{i=3}^q I(\mathcal{Z}_{i}) + \ldots + I(\mathcal{Z}_{q})]\nonumber\\
               & \geq \sum_{i=1}^q I(\mathcal{Z}_{i}) - \theta[M_{I(\mathcal{Z})} + 2M_{I(\mathcal{Z})} + \ldots + (q-1)M_{I(\mathcal{Z})}]\nonumber\\
               & = \sum_{i=1}^q I(\mathcal{Z}_{i}) - \theta  \frac{q(q-1)}{2} M_{I(\mathcal{Z})}
\end{align}

Now, as we have, $m \leq (1 + \frac{1}{\theta})$, from this we observe that when, $\theta  \frac{q(q-1)}{2} M_{I(\mathcal{Z})} \leq \frac{q}{2} M_{I(\mathcal{Z})}$ we have,
\begin{align}
I(\mathcal{Z}) &\geq \sum_{i=1}^q I(\mathcal{Z}_{i}) - \frac{q}{2} M_{I(\mathcal{Z})}\nonumber\\
              &\geq \sum_{i=1}^q I(\mathcal{Z}_{i}) - \frac{q}{2} \cdot \frac{1}{q} \sum_{i=1}^q I(\mathcal{Z}_{i})\nonumber\\
            & \geq \frac{1}{2}\sum_{i=1}^q I(\mathcal{Z}_{i})
\end{align}
\end{proof}

\paragraph{\textbf{Finding Theta-Partition}} Choice of $\theta$-Partition plays an important role in our problem as the solution methodology proposed in this section takes this as an input. As mentioned previously, in $\theta$-partition $\theta$ determines the influence overlap between any two clusters. So, our goal is to find best choice of $\theta$ so that the effectiveness and efficiency of the proposed solution methodology increases. One point is that we want the billboard slot partitions to be balanced. Now, the problem of $\theta$-partitioning can be viewed as the $k$-cut problem as follows. Consider each billboard slot as a vertex and there is an edge between two vertices if there is a significant influence overlap between these two slots. Now, it is easy to observe that $\theta$-Partitioning of the billboard slots is equivalent to solve the $k$-cut problem in the constructed graph. It has been reported in the literature that $k$-Cut Problem is \textsf{NP-hard} \cite{10.1007/3-540-57182-5_65}. So, the remedy is to use approximate $\theta$-partition in which each billboard slot is considered as a single cluster and then iteratively merges two clusters into one cluster if their overlap ratio is greater than $\theta$ value. Now, for each pair of clusters $\mathcal{K}_i, \mathcal{K}_j \subseteq \mathcal{BS}$, if their overlap ratio greater than $\theta$, then $\mathcal{K}_i, \mathcal{K}_j$ will be merged. Now, by repeating this process we can get an approximate $\theta$-partition until no clusters in $\mathcal{BS}$ merge further. Theoretically, it may so happen that all the slots in $\mathcal{BS}$ will be merged into a single cluster. However, in our experiments we have observed that for all location types $\theta$-Partition generates a reasonable number of clusters. For example, in case of location type `Beach' and the value of $\theta = 0.2$, an approximate $\theta$-Partition generates $72$ clusters. Further, using the Spatial Partitioning approach from these $72$ clusters, $40$ clusters remain after pruning for further processing, as presented in Figure \ref{Fig:5_Theta_Cluster}.

\begin{algorithm}[h]
\SetAlgoLined
\KwData{ Trajectory Database $\mathbb{D}$, Billboard Slots $\mathcal{BS}$, Influence function $I()$ and $\theta$.}
\KwResult{An Approximate $\theta$-partition $\mathcal{P}$ of $\mathcal{BS}$}
Initially Each slot of $\mathcal{BS}$ is a single cluster\;
\While{ $\text{No. of Iteration} < \text{Preset Count}$}{
\For{$\text{Each } \mathcal{K}_i,\mathcal{K}_j \in \mathcal{BS}$}{
 $\text{Calculate Overlap Ratio Between }\mathcal{K}_i \text{ and }\mathcal{K}_j$\;
\eIf{$Overlap\_Ratio(\mathcal{K}_i, \mathcal{K}_j) \geq \theta$}{
  Merge $\mathcal{K}_i$ and $\mathcal{K}_j$ into one cluster.
}
{
   Exit.
 }
 }
 }
 Return $\mathcal{BS}$\;
 \caption{ Approximate $\theta$-Partition Approach}
 \label{Algo: Spatial_Partition}
\end{algorithm}

\par{\textbf{Time and Space Complexity Analysis:}} Now, we analyze the time and space complexity of Algorithm \ref{Algo: Spatial_Partition}. The initialization process for Line No. $1$ will take $\mathcal{O}(n)$ time. It can be observed that the \texttt{while loop} of Line $2$ will execute for some constant amount of time and it will take $\mathcal{O}(1)$ time for execution. In Line No. $3$ \texttt{For loop} will iterate $n$ many times and calculating overlap ratio in line no $4$ will take $\mathcal{O}(n)$, if size of the largest cluster is $n$. Line No. $5$ checking condition will take $\mathcal{O}(1)$ time and merging two clusters will take $\mathcal{O}(m+n)$ time if the size of the clusters are $m$ and $n$ respectively. Therefore, from Line No. $3$ to $10$ will take $\mathcal{O}(n^2)$ time approximately. Hence, total time taken by Algorithm \ref{Algo: Spatial_Partition} is $\mathcal{O}(n^2)$. Extra space requirement for Algorithm \ref{Algo: Spatial_Partition} is as follows. To store the billboard slots in $\mathcal{BS}$ will take $\mathcal{O}(n)$ space. After calculating overlap ratio in Line No. $4$, storing it will take $\mathcal{O}(1)$ space. Hence, the total space requirements for Algorithm \ref{Algo: Spatial_Partition} will be $\mathcal{O}(n)$.

\begin{theorem} \label{Th:Algo 2}
The computational time and space requirement of Algorithm \ref{Algo: Spatial_Partition} will be of $\mathcal{O}(n^2)$  and $\mathcal{O}(n)$, respectively.
\end{theorem}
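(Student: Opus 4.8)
The plan is to establish both bounds by a direct, line-by-line accounting of the pseudocode in Algorithm \ref{Algo: Spatial_Partition}, charging each primitive operation and then aggregating the costs through the nested loop structure. First I would dispense with the initialization in Line $1$: seeding the partition with each of the $n$ billboard slots as its own singleton cluster touches every slot exactly once and hence costs $\mathcal{O}(n)$ time. Next I would bound the outer \texttt{while} loop of Line $2$: because its guard is a preset iteration count rather than a data-dependent predicate, it executes only a constant number of times and therefore contributes merely a constant multiplicative factor to the running time.

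The core of the argument is the inner \texttt{for} loop of Lines $3$--$10$. Here I would observe that since there are never more than $n$ clusters alive at once, scanning candidate cluster pairs costs $\mathcal{O}(n)$ iterations, and for each examined pair I must bound (i) the overlap-ratio computation of Line $4$, (ii) the threshold test of Line $5$, and (iii) the merge of Line $6$. The test in (ii) is a single scalar comparison, $\mathcal{O}(1)$, and the merge in (iii) of two clusters of sizes $m$ and $n$ is $\mathcal{O}(m+n)$, which is absorbed into the per-pair budget. Charging $\mathcal{O}(n)$ to the overlap-ratio evaluation when the largest cluster holds up to $n$ slots, each pass of the \texttt{for} loop costs $\mathcal{O}(n^2)$; multiplying by the constant number of outer iterations leaves the total time at $\mathcal{O}(n^2)$.

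The main obstacle I anticipate is justifying the $\mathcal{O}(n)$ charge to Line $4$. Read literally, the overlap ratio of Definition \ref{Def:Overlap Ratio} is an $\operatorname{argmax}$ over \emph{all} subsets $\mathcal{S}_x \subseteq \mathcal{K}_x$ of the quantity $\sigma(\mathcal{S}_x \mid \mathcal{K}_y)/I(\mathcal{S}_x)$, which on its face is exponential in the cluster size rather than linear. Closing this gap is the crux of the proof: I would argue that it suffices to evaluate the overlap at the level of the whole cluster, or, exploiting the monotonicity and submodularity of $I()$ established in Lemma \ref{Lemma:1}, at a single representative subset, so that the ratio reduces to a bounded number of influence evaluations, each of which is linear in the cluster size (recall that a single influence evaluation is linear in the relevant database size). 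Once this reduction is in place the remaining arithmetic is routine.

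Finally, for the space bound I would note that the only persistent storage is the cluster membership of the $n$ slots, which is $\mathcal{O}(n)$; each overlap-ratio value computed in Line $4$ is a single scalar that can be reused across pairs and so contributes only $\mathcal{O}(1)$ auxiliary space. Summing these gives the claimed $\mathcal{O}(n)$ space, completing the analysis and hence the proof of Theorem \ref{Th:Algo 2}.
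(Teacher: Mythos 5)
Your accounting matches the paper's own proof essentially line for line: $\mathcal{O}(n)$ for the singleton initialization, a constant number of \texttt{while} iterations because of the preset count, an $\mathcal{O}(n)$ pair-scan with an $\mathcal{O}(n)$ charge per overlap-ratio evaluation, an $\mathcal{O}(1)$ comparison, an $\mathcal{O}(m+n)$ merge absorbed into the per-pair budget, giving $\mathcal{O}(n^2)$ time; and $\mathcal{O}(n)$ space for the cluster membership plus $\mathcal{O}(1)$ per overlap scalar. The one place you go beyond the paper is your ``crux'' about Line $4$: you are right that Definition \ref{Def:Overlap Ratio} is literally an $\operatorname{argmax}$ over \emph{all} subsets $\mathcal{S}_x \subseteq \mathcal{K}_x$ and hence exponential as written, whereas the paper silently sidesteps this by simply asserting the $\mathcal{O}(n)$ bound --- so the gap you identify is in the paper's own analysis, not in your proof relative to it.

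Be careful, however, with the fix you sketch: monotonicity and submodularity of $I()$ do \emph{not} imply that the maximum of $\sigma(\mathcal{S}_x \mid \mathcal{K}_y)/I(\mathcal{S}_x)$ is attained at the whole cluster or at any single representative subset. A singleton slot $s \in \mathcal{K}_x$ whose influence is entirely overlapped by $\mathcal{K}_y$ satisfies $\sigma(\{s\} \mid \mathcal{K}_y) = I(\{s\})$ and hence attains ratio $1$, which can strictly exceed the whole-cluster ratio whenever $\mathcal{K}_x$ carries any non-overlapped influence; so the literal Definition \ref{Def:Overlap Ratio} quantity is not recovered by your reduction. The honest justification for the $\mathcal{O}(n)$ charge is that the algorithm evaluates a fixed proxy (e.g., the whole-cluster overlap $\sigma(\mathcal{K}_x \mid \mathcal{K}_y)/I(\mathcal{K}_x)$, a bounded number of influence evaluations each linear in the cluster size), which is evidently what the paper's Line $4$ intends; note also that both you and the paper suppress the trajectory-database factor $t$ here, even though each influence evaluation is really $\mathcal{O}(t)$ per slot elsewhere in the paper's analyses.
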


\paragraph{\textbf{Partition-based Selection Method}} Now, we describe our proposed partition-based framework to solve the IBSS Problem. This solution approach uses popular divide and conquer paradigm of algorithm design. First, the billboard slots are partitioned into clusters by approximate $\theta$-Partitioning Technique. Assume that this method generates $m$ many clusters and they are $\mathcal{P}=\{ \mathcal{K}_1, \mathcal{K}_2,\ldots,\mathcal{K}_m \}$. Also assume that all the clusters are sorted by their size, and hence, $\mathcal{K}_m$ is the largest cluster. Now, the influence of individual clusters determines the importance of that cluster in a global solution. So, after calculating the influence of individual clusters, the average influence of all the clusters is calculated and set as a threshold value. If the influence of the billboard slots of a cluster is less than this threshold value means that such clusters are less important, and hence, they can be ignored. This pruning step is used to reduces computational cost greatly while keeping competitive influence quality. Now, from the remaining clusters, we use the following two methods to find out $k$ influential billboard slots.

\begin{itemize}
\item \textbf{First Approach}: In the first approach, after pruning remaining billboard slots are merged and from these we choose Top-$k$ Influential Billboard Slots using incremental greedy approach in which starting with an empty set in each iteration we pick up the billboard slot that causes maximum marginal influence gain. This leads to Algorithm \ref{Algo: Algorithm2}.   
\item \textbf{Second Approach}: In other approach, after pruning all the remaining clusters are merged. On these billboard slots, we apply the pruned submodulariy graph-based technique (As described in Algorithm \ref{Algo: Algorithm1_Simple Greedy}) to select Top-$k$ influential billboard slots.
\end{itemize}
 
\paragraph{\textbf{Approach 1 (Spatial Partitioning + Incremental Greedy Approach)}} Algorithm \ref{Algo: Algorithm2} describes this process. First the set $\mathbb{X}$ is initialized with $\emptyset$ which will subsequently contains the selected billboard slots. $Influence$ is an array of size $m$ where $Influence[i]$ contains the aggregated influence of the billboard slots belong to the $i$-th Cluster. As mentioned previously, $m$ and $n$ denotes the the number of clusters and the number of billboard slots, respectively. First for all the clusters, the aggregated influence probability is computed from Line $3$ to $5$, and subsequently, the average is computed in Line $6$. From Line $7$ to $11$, the pruning of clusters have been done. In Line $12$, all the remaining clusters are merged and from Line $13$ to $16$, Top-$k$ influential billboard slots are chosen using incremental greedy approach. This leads to Algorithm \ref{Algo: Algorithm3}. 
   

\begin{algorithm}[h]
\SetAlgoLined
\KwData{ An Approximate $\theta$-partition $\mathcal{P}$ of $\mathcal{BS}$, Trajectory Database $\mathbb{D}$, Billboard Database $\mathbb{B}$ and $k$.}
\KwResult{ $\mathbb{X} \subseteq \mathcal{BS}$, $\ \text{such that} \ |\mathbb{X}|= k$ }
 Initialize $\mathbb{X} \leftarrow \emptyset$,  $Influence \longleftarrow Array(n,0)$\;
 $m = |\mathcal{P}|$, $n=|\mathcal{BS}|$\;

  \For{$\text{i} \leftarrow  1\; to\; m $}{
   $Influence[i]$ = $I(\mathcal{K}_i)$\;
 }
 $\gamma \longleftarrow \frac{\underset{i \in [n]}{\sum} I(\mathcal{K}_{i})}{n} $\;
\For{$\text{i} \leftarrow  1\; to\; m $}{
 \If{$Influence[i] < \gamma$}{
 $\mathcal{P} \longleftarrow \mathcal{P} \setminus \{\mathcal{K}_i\}$\;
 }
 }
Store billboard slots from remaining clusters to $\mathcal{BS}$\;
 \While{$|\mathbb{X}| \neq k$}{
 $d^{*} \longleftarrow \underset{d \in \mathcal{BS} \setminus \mathbb{X}}{argmax} \ I(\mathbb{X} \cup \{d\}) - I(\mathbb{X})$\;
 $\mathbb{X} \longleftarrow \mathbb{X} \cup \{d^{*}\}$\;
 }
Return $\mathbb{X}$\;
 \caption{ Spatial Partition + Incremental Greedy Approach for IBSS Problem}
 \label{Algo: Algorithm2}
\end{algorithm}	

\par{\textbf{Time and Space Complexity Analysis:}} Now, we analyze the time and space complexity of Algorithm \ref{Algo: Algorithm2}. All the initialization statements of Line $1$, will take $\mathcal{O}(1)$ time. In each cluster the number of billboard slots can be $\mathcal{O}(n)$ in the worst case. Computing influence of all the clusters will take $\mathcal{O}(m \cdot n \cdot t)$ time and taking their average will consume $\mathcal{O}(m)$ time. It is easy to observe that the pruning step will take $\mathcal{O}(m)$ time. Till Line $12$, the computational time requirement will be of $\mathcal{O}(m \cdot n \cdot t)$. In the worst case, $\mathcal{O}(n)$ many billboard slots will be there after pruning. Now for selecting Top-$k$ Influential Billboard Slots, we need to conduct $k$ iterations. In each iteration, we need to conduct $\mathcal{O}(n)$ many marginal gain computations. For each marginal gain computation, time requirement is of $\mathcal{O}(n \cdot t)$. Hence, the time requirement for the incremental greedy approach is of $\mathcal{O}(k \cdot n^{2} \cdot t)$. So the total time requirement of Algorithm \ref{Algo: Algorithm2} will be of $\mathcal{O}(m \cdot n \cdot t+k \cdot n^{2} \cdot t)$. It can be observed that the number of cluster can be at most the number of billboard slots. As the value of $m$ can be at most $n$, the complexity of Algorithm \ref{Algo: Algorithm2} will be of $\mathcal{O}(k \cdot n^{2} \cdot t)$. Additional space consumed by Algorithm \ref{Algo: Algorithm2} to store $\mathbb{X}$ which will be of $\mathcal{O}(k)$, the array $Influence$ which is of size $\mathcal{O}(m)$. Also, $\mathcal{O}(n)$ space is required to store the marginal gains of the billboard slots. Hence, the total space requirement will be of $\mathcal{O}(k+n+m)$. As $m$ can be of $\mathcal{O}(n)$, the space complexity of Algorithm \ref{Algo: Algorithm2} will be reduced to $\mathcal{O}(k+n)$.


\begin{theorem} \label{Th:Algo 2}
The computational time and space requirement of Algorithm \ref{Algo: Algorithm2} will be of $\mathcal{O}( k\cdot n^{2} \cdot t)$  and $\mathcal{O}(k + n)$, respectively.
\end{theorem}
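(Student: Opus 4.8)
The plan is to establish both bounds by a line-by-line accounting of Algorithm \ref{Algo: Algorithm2}, summing the per-block costs and then collapsing the result using the structural fact that the number of clusters never exceeds the number of billboard slots, i.e., $m \leq n$. The key primitive I would cost first is a single evaluation of the influence function $I(\cdot)$: by Equation \ref{Eq:1}, evaluating $I(\mathcal{C})$ requires, for each of the $t$ tuples of the trajectory database $\mathbb{D}$, taking a product over the slots in $\mathcal{C}$, so evaluating $I$ on a set of $\mathcal{O}(n)$ slots costs $\mathcal{O}(n \cdot t)$. Every remaining step reduces to counting how many such evaluations each block performs.

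For the time bound I would walk through the algorithm in order. The initializations (Lines $1$--$2$) are $\mathcal{O}(1)$ apart from allocating the $Influence$ array. The loop of Lines $3$--$5$ computes $I(\mathcal{K}_i)$ once per cluster; since a cluster can contain $\mathcal{O}(n)$ slots in the worst case, this block costs $\mathcal{O}(m \cdot n \cdot t)$. Computing the threshold $\gamma$ (Line $6$) and the pruning sweep (Lines $7$--$11$) each touch the $m$ clusters once, contributing $\mathcal{O}(m)$, while merging the surviving slots back into $\mathcal{BS}$ (Line $12$) is $\mathcal{O}(n)$. The dominant block is the incremental-greedy loop of Lines $13$--$16$: it runs for exactly $k$ iterations, each iteration scanning the $\mathcal{O}(n)$ remaining slots and performing, per candidate, one marginal-gain computation of cost $\mathcal{O}(n \cdot t)$, giving $\mathcal{O}(k \cdot n^{2} \cdot t)$. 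Adding the blocks yields $\mathcal{O}(m \cdot n \cdot t + k \cdot n^{2} \cdot t)$, and substituting $m \leq n$ collapses this to $\mathcal{O}(k \cdot n^{2} \cdot t)$.

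For the space bound the accounting is simpler: $\mathbb{X}$ holds at most $k$ slots ($\mathcal{O}(k)$), the $Influence$ array has one entry per cluster ($\mathcal{O}(m)$), and storing the marginal gains of the candidate slots during the greedy phase needs $\mathcal{O}(n)$. Summing gives $\mathcal{O}(k + n + m)$, which reduces to $\mathcal{O}(k + n)$ again via $m \leq n$.

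The step most in need of care is the costing of the greedy loop, since the claimed bound hinges on two worst-case estimates: that $\mathcal{O}(n)$ slots survive pruning and that each marginal-gain evaluation is $\mathcal{O}(n \cdot t)$ rather than something cheaper. I would justify the former by noting that the pruning threshold $\gamma$ is merely an average, so in the adversarial case essentially no cluster is discarded, and the latter follows directly from the cost of evaluating $I$ derived at the outset. The only other subtlety is the repeated appeal to $m \leq n$, which holds because the $\theta$-partition assigns each slot to exactly one cluster (Definition \ref{Def:Partition}), so the number of nonempty clusters cannot exceed the number of slots.
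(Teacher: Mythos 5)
Your proposal is correct and follows essentially the same route as the paper's own analysis: the identical block-by-block accounting (influence of the $m$ clusters at $\mathcal{O}(m \cdot n \cdot t)$, pruning at $\mathcal{O}(m)$, the $k$-iteration greedy loop with $\mathcal{O}(n)$ marginal-gain computations of cost $\mathcal{O}(n \cdot t)$ each), the same collapse of $\mathcal{O}(m \cdot n \cdot t + k \cdot n^{2} \cdot t)$ and $\mathcal{O}(k+n+m)$ via $m \leq n$, and the same space inventory ($\mathbb{X}$, the $Influence$ array, and the stored marginal gains). Your added justifications --- costing a single evaluation of $I(\cdot)$ from Equation \ref{Eq:1} and noting that the average-based threshold $\gamma$ may prune nothing in the adversarial case --- merely make explicit what the paper leaves implicit.
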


\paragraph{\textbf{Approach 2 (Spatial Partition+ Pruned Submodularity Graph +Incremental Greedy Approach)}} 
In this approach also we use the similar kind of pruning technique of clusters, and hence, Line $1$ to $11$ of Algorithm \ref{Algo: Algorithm2} is repeated in Algorithm \ref{Algo: Algorithm3} as well. Once the pruning at the cluster level is done, we merge the billboard slots of all the remaining clusters and we use the Pruned Submodularity Graph-based technique to reduce the ground set from Line $3$ to $11$. Finally we use the incremental greedy approach to select Top-$k$ Influential Slots from Line $12$ to $15$.


\begin{algorithm}[h]
\SetAlgoLined
\KwData{ A $\theta$-partition $P$ of $\mathbb{BS}$}
\KwResult{ $\mathbb{X} \subseteq \mathcal{BS}$, $\ \text{such that} \ |\mathbb{X}|= k$ }
Execute Line $1$ to $11$ from Algorithm \ref{Algo: Algorithm2}\;

%
Store billboard slots from remaining clusters to $\mathcal{\overline{BS}}$
$\text{ and construct \textsl{PSG}}$\;
\While{$|\mathcal{\overline{BS}}| > \ h \cdot \log n$}{
 $\text{Pick }h \cdot \log n \text{ slots randomly from }\mathcal{\overline{BS}}  \text{ and store in }\mathbb{U}$\; 
 $\mathcal{\overline{BS}} \longleftarrow \mathcal{\overline{BS}} \setminus \mathbb{U}$, $\mathcal{\overline{S}} \longleftarrow \mathcal{\overline{S}} \cup \mathbb{U} $\;
 \For{$\text{All }d \in \mathcal{\overline{BS}} $}{
 $\mathcal{W}_{\mathbb{U}d} \longleftarrow \ \underset{u \in \mathbb{U}}{min} \ [I(d|u)-I(u| \mathcal{\overline{BS}} \setminus \{u\})]$\;
 }
 $\text{Take out }(1-\frac{1}{\sqrt{l}}) \cdot |\mathcal{\overline{BS}}| \text{ elements from }\mathcal{\overline{BS}} \text{ having smallest }\mathcal{W}_{\mathbb{U}d}$\;
 }
 $\mathcal{\overline{BS}} \longleftarrow \mathcal{\overline{BS}} \  \bigcup \ \mathcal{\overline{S}} $\;
 \While{$|\mathbb{X}| \neq k$}{
 $d^{*} \longleftarrow \underset{d \in \mathcal{\overline{BS}} \setminus \mathbb{X}}{argmax} \ I(\mathbb{X} \cup \{d\}) - I(\mathbb{X})$\;
 $\mathbb{X} \longleftarrow \mathbb{X} \cup \{d^{*}\}$\;
 }
Return $\mathbb{X}$\;
 \caption{ Spatial Partition + \textsl{PSG} + Incremental Greedy Approach for \textsl{IBSP} Problem}
 \label{Algo: Algorithm3}
\end{algorithm}	

\par Now, we analyze the time and space complexity of Algorithm \ref{Algo: Algorithm3}. As mentioned in the analysis of Algorithm \ref{Algo: Algorithm2}, the pruning step at the cluster level requires $\mathcal{O}(m \cdot n \cdot t)$. In the worst case, even after pruning at the cluster level, the number of  billboard slots remaining will be of $\mathcal{O}(n)$. As mentioned in the analysis of Algorithm \ref{Algo: Algorithm1_Simple Greedy} reducing  a ground set of $n$ elements will take $\mathcal{O}(n \cdot \log^{2}n)$ time. From the reduced ground set, selecting Top-$k$ Influential Billboard Slots will take $\mathcal{O}(k \cdot t \cdot \log^{2}n)$ time. Hence, the total time requirement will be $\mathcal{O}(m \cdot n \cdot t+n \cdot \log^{2}n + k \cdot t \cdot \log^{2}n)$. As mentioned in the analysis of Algorithm \ref{Algo: Algorithm1_Simple Greedy} reduction of the ground set by Pruned Submodularity Graph will take $\mathcal{O}(n^{2})$ space. Also, it can be observed that the pruning step at the cluster level will take $\mathcal{O}(m)$ space. However, as the value of $m$ can be of $\mathcal{O}(n)$ in the worst case, hence the space complexity of Algorithm \ref{Algo: Algorithm3} will be of $\mathcal{O}(n^{2})$. Hence, Theorem 6 holds.

\begin{theorem}
Time and space requirement of Algorithm \ref{Algo: Algorithm3} will be of $\mathcal{O}(m \cdot n \cdot t+n \cdot \log^{2}n + k \cdot t \cdot \log^{2}n)$ and $\mathcal{O}(n^{2})$, respectively.

\end{theorem}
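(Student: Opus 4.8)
The plan is to treat Algorithm \ref{Algo: Algorithm3} as the concatenation of three independent phases and to bound each phase by reusing the per-algorithm analyses already established for Algorithms \ref{Algo: Algorithm2} and \ref{Algo: Algorithm1_Simple Greedy}, then add the three bounds. Phase one is the cluster-level pruning of Lines $1$--$11$ (imported verbatim from Algorithm \ref{Algo: Algorithm2}); phase two is the Pruned Submodularity Graph reduction of the merged ground set (the \texttt{while} loop); phase three is the incremental greedy selection of the final $k$ slots.

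For phase one I would invoke the accounting already done for Algorithm \ref{Algo: Algorithm2}: computing the aggregated influence of each of the $m$ clusters costs $\mathcal{O}(n \cdot t)$ per cluster in the worst case (a cluster may hold $\mathcal{O}(n)$ slots and each influence evaluation scans the $t$ trajectory tuples), so the loop costs $\mathcal{O}(m \cdot n \cdot t)$, while computing the threshold $\gamma$ and the comparison-based pruning add only $\mathcal{O}(m)$. For phase two I would reuse the bound from Algorithm \ref{Algo: Algorithm1_Simple Greedy}: the reduction \texttt{while} loop runs $\mathcal{O}(\log_{\sqrt{\ell}} n)$ times, each iteration performing $\mathcal{O}(n \log n)$ work for divergence computation, sorting, and removal, giving $\mathcal{O}(n \cdot \log^{2} n)$ in total and leaving a reduced set of $\mathcal{O}(\log^{2} n)$ slots. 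Phase three then runs $k$ greedy iterations over this reduced set, each costing $\mathcal{O}(t \cdot \log^{2} n)$, for a total of $\mathcal{O}(k \cdot t \cdot \log^{2} n)$. Summing the three phases and using $m = \mathcal{O}(n)$ yields the claimed time bound. For the space bound I would note that the adjacency matrix of the Pruned Submodularity Graph dominates with $\mathcal{O}(n^{2})$, whereas the cluster bookkeeping and the reduced-set auxiliary structures contribute only $\mathcal{O}(m) = \mathcal{O}(n)$, so the total is $\mathcal{O}(n^{2})$.

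The step requiring the most care is the cost of \emph{constructing} the Pruned Submodularity Graph in Line $2$. In the analysis of Algorithm \ref{Algo: Algorithm1_Simple Greedy} this construction alone contributed a dominant $\mathcal{O}(n^{3} \cdot t)$ term, since there are $\mathcal{O}(n^{2})$ edges and each edge weight from Equation \ref{Eq:weight} costs $\mathcal{O}(n \cdot t)$. Because the stated bound omits any $n^{3} \cdot t$ term, the honest way to close the proof is to build the graph only on the slots surviving the cluster-level pruning and to carry that reduced cardinality --- rather than the blanket worst-case $\mathcal{O}(n)$ --- through the edge count; if instead one insists on the worst case where $\mathcal{O}(n)$ slots survive, the construction term $\mathcal{O}(n^{3} \cdot t)$ must be reinstated and the time bound corrected accordingly. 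I would therefore make the post-pruning ground-set size the explicit governing parameter of phase two and charge the construction and reduction costs consistently against it; this is the one place where a uniform ``worst case $\mathcal{O}(n)$'' substitution would otherwise break the stated result.
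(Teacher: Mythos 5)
Your decomposition and the resulting bounds coincide with the paper's own proof: the paper likewise charges $\mathcal{O}(m \cdot n \cdot t)$ for the cluster-level pruning (imported from the Algorithm \ref{Algo: Algorithm2} analysis), $\mathcal{O}(n \cdot \log^{2} n)$ for the PSG reduction of a ground set that may still contain $\mathcal{O}(n)$ slots, $\mathcal{O}(k \cdot t \cdot \log^{2} n)$ for the final greedy selection over the $\mathcal{O}(\log^{2} n)$-sized reduced set, and $\mathcal{O}(n^{2})$ space dominated by the adjacency matrix, with the cluster bookkeeping absorbed via $m = \mathcal{O}(n)$. Where you go beyond the paper is your closing paragraph, and the concern you raise there is genuine: the paper's proof explicitly assumes the worst case in which $\mathcal{O}(n)$ slots survive the cluster-level pruning, yet it silently omits the PSG \emph{construction} cost that its own analysis of Algorithm \ref{Algo: Algorithm1_Simple Greedy} charged as $\mathcal{O}(n^{3} \cdot t)$ for computing the weights of $\mathcal{O}(n^{2})$ edges at $\mathcal{O}(n \cdot t)$ each --- the very term that dominates Theorem \ref{Th:Time_and_Space}. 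Since Algorithm \ref{Algo: Algorithm3} builds the same graph on the merged surviving slots in its Line $2$, the stated bound $\mathcal{O}(m \cdot n \cdot t + n \cdot \log^{2} n + k \cdot t \cdot \log^{2} n)$ is consistent with the paper's earlier accounting only if, as you propose, the post-pruning ground-set size is made the governing parameter and is argued to be asymptotically smaller than $n$; the paper supplies no such argument. So your proof follows the same route as the paper's but with stricter bookkeeping, and the discrepancy you flag is a real gap in the paper's proof of this theorem, not in yours.
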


\par{\textbf{An Illustrative Example}}  Given Trajectory database $\mathbb{D}$, Billboard Database $\mathbb{B}$, Influence function $I()$, Billboard slots $\mathcal{BS}$ contains $\{ bs_1,bs_2, \ldots, bs_{30}\}$ and initially we take each billboard slots in $\mathcal{BS}$ as a cluster. Now, for a given $\theta$ value $(0 \leq \theta \leq 1)$ we iteratively computes overlap ratio between any two clusters and if the overlap ratio between two clusters greater than $\theta$, then merges two clusters as described in Algorithm \ref{Algo: Spatial_Partition}. In this way we generate a partition $P$ of $\mathcal{BS}$ where $P$ =  $\{{\mathcal{K}_1, \mathcal{K}_2, \mathcal{K}_3,\mathcal{K}_4,\mathcal{K}_5}\}$ and $\mathcal{K}_1$ = $\{{bs_4, bs_6, bs_{13}, bs_{19}, bs_{15}, bs_{11}, bs_{21}}\}$, $\mathcal{K}_2$ = $\{{bs_{17}, bs_{20}, bs_{9}, bs_{12},bs_{27}}\}$, $\mathcal{K}_3$ = $\{{bs_{3},bs_{7}, bs_{10}, bs_{16}, bs_{23}}\}$, $\mathcal{K}_4$ = $\{{bs_{1}, bs_{5}, bs_{8}, bs_{18}, bs_{25}, bs_{29}, bs_{24}}\}$, $\mathcal{K}_5$ = $\{{bs_{2}, bs_{26},bs_{14}, bs_{22}, bs_{28}, bs_{30}}\}$. For better understanding, we assume that cost of every billboard slot in $\mathcal{BS}$ is 1. At first, we calculate the influence of each clusters using Equation \ref{Eq:1} and then compute the average influence of each clusters. Let for clusters $\mathcal{K}_1$,  $\mathcal{K}_2$, $\mathcal{K}_3$, $\mathcal{K}_4$ and $\mathcal{K}_5$ average influence are 3, 2.8, 2.2, 2.5, 3.1 respectively. Now, we can calculate $\gamma$ from average influence of each clusters i.e. $\gamma$ = $\{{(3+2.8+2.2+2.5+3.1)/5}\}$ = 2.72. Now we want to prune all the clusters whose average influence value is less than $\gamma$ and for this reason cluster $\mathcal{\mathcal{K}}_3$ and $\mathcal{\mathcal{K}}_4$ are removed from $P$. After that, the remaining clusters in $P$ are merged and stored in $\mathcal{\overline{BS}}$ and applied pruned submodularity graph for further pruning. Let, after pruning we get billboard slots, $\mathcal{\overline{BS}}$ = $\{{bs_4, bs_{11}, bs_{14}, bs_{17}, bs_{19,} bs_{22}, bs_{26}, bs_{30}}\}$. For budget k = 5, we want to find maximum influential billboard slots from $\mathcal{\overline{BS}}$ and for this, we have applied incremental greedy approach and finally top influential billboard slots denoted as $\mathbb{X}$ =$\{{bs_4, bs_{14}, bs_{19,} bs_{22}, bs_{30}}\}$ are selected.

Now, from Lemma \ref{Lemma:2}, and Theorem \ref{Zohu_Result}, we derive the approximation ratio achieved by Algorithm \ref{Algo: Algorithm3} in Theorm \ref{Th:Theta1}.

\begin{theorem} \label{Th:Theta1}
Given a $\theta$-partition, $P =\{\mathcal{K}_{1}, \mathcal{K}_{2},\ldots,\mathcal{K}_{m}\}$ Algorithm \ref{Algo: Algorithm3} achieve an approximation ratio of $\frac{1}{2}^{\lceil \log_{(1+\frac{1}{\theta})}m\rceil} (1-\frac{1}{e})\cdot (I(\mathcal{Z}^{OPT})-2k\epsilon)$ to the Influential Billboard Slot Selection problem.
\end{theorem}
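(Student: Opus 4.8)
The plan is to decompose the claimed ratio into two independent sources of loss and bound each separately: the factor $\left(\frac{1}{2}\right)^{\lceil \log_{(1+\frac{1}{\theta})}m\rceil}$ that accrues from hierarchically merging the clusters of the $\theta$-partition, and the factor $(1-\frac{1}{e})$ together with the additive $2k\epsilon$ slack supplied directly by Theorem \ref{Zohu_Result}. First I would fix notation: let $\mathcal{Z}^{OPT}$ be an optimal size-$k$ solution, and for each cluster $\mathcal{K}_i$ write $\mathcal{Z}^{OPT}_i = \mathcal{Z}^{OPT} \cap \mathcal{K}_i$ for the part of the optimum lying in cluster $i$. Because the $\theta$-partition is a genuine partition (Definition \ref{Def:Partition}) these pieces are disjoint and $\mathcal{Z}^{OPT} = \bigcup_i \mathcal{Z}^{OPT}_i$, while subadditivity of the submodular $I(\cdot)$ gives $\sum_i I(\mathcal{Z}^{OPT}_i) \geq I(\mathcal{Z}^{OPT})$.

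The heart of the argument is a recursive application of Lemma \ref{Lemma:2}. Since that lemma only controls the loss when at most $1+\frac{1}{\theta}$ pieces are combined, I would build a balanced merge tree of arity $\lceil 1+\frac{1}{\theta}\rceil$ over the clusters: at each internal node group at most $1+\frac{1}{\theta}$ children and invoke Lemma \ref{Lemma:2} to conclude that the influence of the merged node is at least $\frac{1}{2}$ times the sum of the influences of its children. The only property the proof of Lemma \ref{Lemma:2} actually uses is the pairwise inequality $I(\mathcal{Z}_x \cup \mathcal{Z}_y) \geq I(\mathcal{Z}_x) + (1-\theta)I(\mathcal{Z}_y)$, which the $\theta$-partition guarantees at every granularity, so the lemma lifts unchanged from individual slots to clusters and then to the super-clusters formed at each level. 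The tree has depth $\lceil \log_{(1+\frac{1}{\theta})}m\rceil$, and each level contributes one factor of $\frac{1}{2}$; telescoping from the leaves to the root therefore yields $I\!\left(\bigcup_i \mathcal{Z}^{OPT}_i\right) \geq \left(\frac{1}{2}\right)^{\lceil \log_{(1+\frac{1}{\theta})}m\rceil} \sum_i I(\mathcal{Z}^{OPT}_i)$, which is exactly the partition-merging portion of the bound.

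The second step chains this with Theorem \ref{Zohu_Result}. After the cluster-level pruning against the threshold $\gamma$ and the merge of the surviving clusters into the reduced ground set $\mathcal{\overline{BS}}$, Algorithm \ref{Algo: Algorithm3} runs precisely the pruned-submodularity-graph plus incremental-greedy pipeline of Algorithm \ref{Algo: Algorithm1_Simple Greedy}. Hence Theorem \ref{Zohu_Result} applies verbatim to this final stage and certifies that the returned set $\mathbb{X}$ obeys $I(\mathbb{X}) \geq (1-\frac{1}{e})\big(I^{\star} - 2k\epsilon\big)$, where $I^{\star}$ is the optimum over $\mathcal{\overline{BS}}$. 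Using the merge-tree inequality of the previous step to lower bound $I^{\star}$ by $\left(\frac{1}{2}\right)^{\lceil \log_{(1+\frac{1}{\theta})}m\rceil}$ times the influence of the optimal pieces retained in $\mathcal{\overline{BS}}$, and then identifying that quantity with $I(\mathcal{Z}^{OPT})$, one collects the two factors into the claimed ratio $\left(\frac{1}{2}\right)^{\lceil \log_{(1+\frac{1}{\theta})}m\rceil}(1-\frac{1}{e})\big(I(\mathcal{Z}^{OPT})-2k\epsilon\big)$.

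The step I expect to be the main obstacle is the pruning: discarding every cluster whose aggregate influence falls below the average $\gamma$ must be shown not to destroy a significant share of $I(\mathcal{Z}^{OPT})$, since Theorem \ref{Zohu_Result} is stated against the optimum over the surviving ground set $\mathcal{\overline{BS}}$ rather than the global optimum. I would argue that because of the small inter-cluster overlap enforced by the $\theta$-partition, the influence of the below-threshold clusters is dominated by that of the retained ones, so the merge-tree bound can be restricted to the surviving clusters while still charging only the already-accounted factor of $\frac{1}{2}$ per level. Making this domination quantitative, and reconciling the precise placement of the $-2k\epsilon$ term under the power of $\frac{1}{2}$, are the details that require the most care.
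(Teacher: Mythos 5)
Your proposal follows essentially the same route as the paper's proof: the paper likewise bounds the partition loss by iteratively merging clusters in groups of at most $(1+\frac{1}{\theta})$, invoking Lemma \ref{Lemma:2} once per level for at most $\lceil \log_{(1+\frac{1}{\theta})} m \rceil$ levels to obtain $I(\mathcal{Z}^{t}) \geq \left(\frac{1}{2}\right)^{\lceil \log_{(1+\frac{1}{\theta})} m \rceil} \sum_{i=1}^{q} I(\mathcal{Z}_{i})$, and then chains this with the $(1-\frac{1}{e})\cdot(I(\mathcal{Z}^{OPT})-2k\epsilon)$ guarantee of Theorem \ref{Zohu_Result}, exactly as your merge-tree argument does. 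The ``main obstacle'' you flag---that the $\gamma$-threshold cluster pruning could discard part of the optimum, so that Theorem \ref{Zohu_Result} only certifies against the surviving ground set---is not resolved in the paper either (its proof never mentions the pruning step at all), so your candid acknowledgment of that unfinished detail makes your proposal at least as rigorous as the published argument rather than a divergence from it.
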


\begin{proof}
Let us assume $\mathcal{Z} = \{\mathcal{Z}_{1},\mathcal{Z}_{2},\ldots,\mathcal{Z}_{q}\}$, where $\mathcal{Z}_{i} = \mathcal{Z} \cap \mathcal{C}_{i}$ and $i \leq q \leq m$. When $\theta$ = 0, we have $I(\mathcal{Z})= \sum_{i=1}^q I(\mathcal{Z}_{i})$ and from the Lemma \ref{Lemma:2}, we have assume that $\sum_{i=1}^q I(\mathcal{Z}_{i})$ is the maximum influence. Now, from Theorem \ref{Zohu_Result}, we can say that pruned submodularity graph return $S^{'}$, and $I(S^{'}) \geq (1-\frac{1}{e})\cdot (I(\mathcal{Z}^{OPT})-2k\epsilon)$, where $\mathcal{Z}^{OPT}$ is the optimal solution of budget $k$. So, we can write $I(\mathcal{Z})= \sum_{i=1}^q I(\mathcal{Z}_{i}) \geq I(S^{'})$. \\
When $\theta > 0$, we have $I(\mathcal{Z})\leq \sum_{i=1}^q I(\mathcal{Z}_{i})$. Now, by considering iterative process, at the $0^{th}$ iteration we denote $\mathcal{Z}^{0}$ as set of billboard slot clusters in which $\mathcal{Z}^{0}_{x}$ belongs to $\mathcal{Z}_{x}$. Let, at the $j^{th}$ iteration we arbitrarily partition the clusters in $\mathcal{Z}^{j-1}$ and by combining them new disjoint clusters for $\mathcal{Z}^{j}$ are generated. So, every cluster $\mathcal{Z}^{j}_{x}$ in $\mathcal{Z}^{j}$ contains at most $(1+\frac{1}{\theta})$ clusters from $\mathcal{Z}^{j-1}$. Now, from the Lemma \ref{Lemma:2}, we can write, $ I(\mathcal{Z}^{j}_{x}) \geq \frac{1}{2} \sum_{\mathcal{Z}^{j-1}_{i} \in \mathcal{Z}^{j}_{x}} I(\mathcal{Z}^{j-1}_{i})$. Considering this iteration process runs $t$ times and as $q<m$, $t$ does not exceed ${\lceil \log_{(1+\frac{1}{\theta})}m\rceil}$. So, we can write $I(\mathcal{Z}^{t}) \geq \frac{1}{2}^{\lceil \log_{(1+\frac{1}{\theta})}m\rceil} \sum_{i=1}^q I(\mathcal{Z}_{i})$. 
Now, if we assume that $\mathcal{Z}^{t}$ have only one billboard slot set $\mathcal{Z}^{t}_{1}$, then $I(\mathcal{Z}^{t}) = \mathcal{Z}^{t}_{1}$, and $\mathcal{Z}^{t}_{1} = \mathcal{Z}$. Now, as $\sum_{i=1}^q I(\mathcal{Z}_{i}) \geq I(S^{'}) \geq (1-\frac{1}{e})\cdot (I(\mathcal{Z}^{OPT})-2k\epsilon)$, we can write,

\begin{align}
I(\mathcal{Z}^{t}) &\geq \frac{1}{2}^{\lceil \log_{(1+\frac{1}{\theta})}m\rceil} \sum_{i=1}^q I(\mathcal{Z}_{i})\nonumber\\
&\geq \frac{1}{2}^{\lceil \log_{(1+\frac{1}{\theta})}m\rceil} (1-\frac{1}{e})\cdot (I(\mathcal{Z}^{OPT})-2k\epsilon)
\end{align}
Hence, we can conclude that after $\theta$-partition, if directly pruned submodularity graph followed by incremental greedy applied in Algorithm \ref{Algo: Algorithm3}, then it achieves an approximation ratio of $\frac{1}{2}^{\lceil \log_{(1+\frac{1}{\theta})}m\rceil} (1-\frac{1}{e})\cdot (I(\mathcal{Z}^{OPT})-2k\epsilon)$.
\end{proof}

\section{Experimental Evaluation}\label{Sec:EE}
This section will discuss the experimental evaluation of our proposed solution approaches. Now, we start by describing the datasets used in our experiments.

\subsection{Datasets}
Previous studies have already used the datasets we used in the present study \cite{zhang2020towards}. We have used two real-world trajectory datasets for our experiments. The TLC trip record dataset \footnote{\url{http://www.nyc.gov/html/tlc/html/about/trip_record_data.shtml.}} for NYC is the first, while the Foursquare check-in dataset \footnote{\url{https://sites.google.com/site/yangdingqi/home.}} is the second. Both of these datasets include records of green taxi trips made between January 2013 and September 2016 at various locations, including malls, beaches,
airports, etc. We construct six alternative datasets from these trajectory datasets by separating the tuples that correspond to the following six locations: `Beach', `Mall', `Bank', `Bus Station', `Train Station', and `Airport'. Table \ref{table:Dataset} presents a summary of this dataset’s basic information.

\begin{table} [h!]
    \centering
    \caption{Description of the Datasets}
    \begin{tabular}{ || c c c c || }
    \hline
    Location Type & \# Billboard & \# Trajectory  & \# Billboard Slots\\ [0.5 ex]
    \hline \hline
    Beach & 76 & 575 & 21888 \\
    Mall & 86 & 1186 & 24768 \\
    Bank & 671 & 2232 &  193248 \\
    Bus Station & 1056& 4473 &  304128 \\
    Train Station & 288 & 6407 & 82944 \\
    Airport & 313 & 2852 & 90144 \\ [1ex]
    \hline
    \end{tabular}
    \label {table:Dataset}
    \end{table}

\subsection{Experimental Set Up}
\textbf{Billboard} data is crawled from one of the largest advertising company, LAMAR\footnote{\url{http://www.lamar.com/InventoryBrowser.}}, and the dataset contains attributes like billboard id, panel size, location, latitude, longitude, etc. Now, we choose the panel size of the largest billboard and compute the influence probability of a billboard, i.e., the ratio between the size of that billboard and the size of the largest billboard.

\textbf{Performance.} The runtime and influence value of the picked billboard slots is used to compute the effectiveness of all methods. The results of the average performance of each experiment are reported after being repeated three times.

\textbf{Setup.} In our experiments, all the codes are implemented in Python, and experiments are conducted on a Ubuntu-operated system with Intel Xeon(R) 3.50GHz CPU and 64 GB memory.

\subsection{Algorithms Compared}
The performance of the proposed solution approach is compared against the following approaches:
\begin{itemize}

\item \textbf{RANDOM}: This is the most straightforward approach to choosing the most influential billboard slots. Here, for a given value of $k$, we randomly select any $k$ billboard slots, and its running time will be $\mathcal{O}(k)$.

\item \textbf{Top-$k$}: In this approach, we first compute the individual influence for each billboard slot and then sort them in descending order. After that, from the sorted list, we pick top-$k$ slots from the list based on individual influence. Now, if there are $t$ many tuples in the trajectory dataset, computing a slot's influence will take $\mathcal{O}(t)$ time. As there is $n$ number of slots in the billboard dataset total time required will be $\mathcal{O}(n \cdot t)$. Sorting the list of slots will take $\mathcal{O}(n \cdot \log n)$ time. Finally, choosing $k$ number of billboard slots will take $\mathcal{O}(k)$ time. Hence, the total time required will be $\mathcal{O}(n \cdot t + n \cdot \log n + k)$. As $k << n$, we can write it as $\mathcal{O}(n \cdot t + n \cdot \log n)$.

\item \textbf{Maximum Coverage}: In this approach, we calculate the coverage of each billboard slot. We define the term `coverage' so that a slot $x \in \mathbb{BS}$ covers a tuple $t$ in the trajectory database if slots timing is present in the trajectory tuples. After that, we sort the billboard slots based on the coverage value of individual slots and pick top-$k$ slots having the highest coverage value. As there is $n$ number of slots total time required will be $\mathcal{O}(n \cdot t)$. Sort the slots will take $\mathcal{O}(n \cdot \log n)$ time. Finally, choosing $k$ number of slots will take $\mathcal{O}(k)$ time. So, the total time required will be $\mathcal{O}(n \cdot t + n \cdot \log n + k)$. As $k << n$, needed total time $\mathcal{O}(n \cdot t + n \cdot \log n)$.

\item \textbf{PartSel}: PartSel is a partition-based method, and it follows the divide and conquers approach to combine partial solutions from the clusters. At first, we partition billboard slots into a set of clusters. Then from each cluster, using EnumSel \cite{zhang2020towards}, algorithms select partial solutions and finally combine partial solutions to generate global solutions, i.e., a Set of most influential billboard slots. But we observe that the PartSel algorithm is inefficient due to huge computational time requirements when the size of the billboard dataset is very large.

\item \textbf{\textsl{PSG} + Random}: This approach applies a pruned submodularity graph to prune less influential billboard slots and obtain a subset $S \in \mathbb{BS}$  after pruning. Next, we pick $k$ of them from $S$ randomly. The time required for pruned submodularity graph is $\mathcal{O}(n^{2} \cdot t + n\cdot \log^{2} \cdot n)$ as discussed in Algorithm \ref{Algo: Algorithm1_Simple Greedy}. From $S$ choosing $k$ number of slots will take $\mathcal{O}(k)$ time. Hence, we can say the total time required
$\mathcal{O}(n^{2} \cdot t + n\cdot \log^{2} \cdot n)$.
\end{itemize}

\subsection{Goals of the Experiments}

In this study, we want to perform the experiments to show the following:
\begin{itemize}
\item \textbf{$\theta$ Vs. Influence} to show the effect of varying $\theta$ on different algorithms.
\item \textbf{$\theta$ Vs. Time} to show the efficiency of varying $\theta$ on different algorithms.
\item \textbf{$\theta$ Vs. Number of Clusters} to show the impact of $\theta$ to generate clusters on different datasets.
\item \textbf{Budget(k) Vs. Influence} to show the impact of different k values to generate the influence for different algorithms on different datasets.
\item \textbf{Budget(k) Vs. Time} to show the efficiency of different k values to generate the influence for different algorithms on different datasets.
\item \textbf{Distance$(\lambda)$ Vs. Influence} to show the impact of different $\lambda$ values to generate the influence for different algorithms on different datasets.

\end{itemize} 
\subsection{Experimental Results and Discussions}

\paragraph{\textbf{Budget Vs. Influence}}
Figure \ref{Fig:1_Influence} shows the plots for the number of billboard slots vs. the influence of different locations. From these figures, we can observe that among the existing solution approaches, in most cases, the PartSel leads to the comparable influence obtained by the proposed solution approaches. For example, when the number of billboard slots in Bank location type is $25$, the acquired influence by the existing methods, i.e., PartSel, Top-$k$, Coverage, Pruned Submodularity Graph+Random and Random are $155.79$, $154.67$, $146.95$, $39.42$, $16.45$ respectively. So, among these methods, PartSel leads to the maximum influence. However, the influence due to the proposed solution approaches, namely Spatial Partition Approach, Pruned Submodularity Graph+Incremental Greedy, $\theta$-Partition+Incremental Greedy $158.19$, $157.78$ and $157.48$ respectively. Hence, the Spatial Partition Approach leads to the maximum influence among them. These observations are consistent even for other location types as well.

\begin{figure*}[!ht]
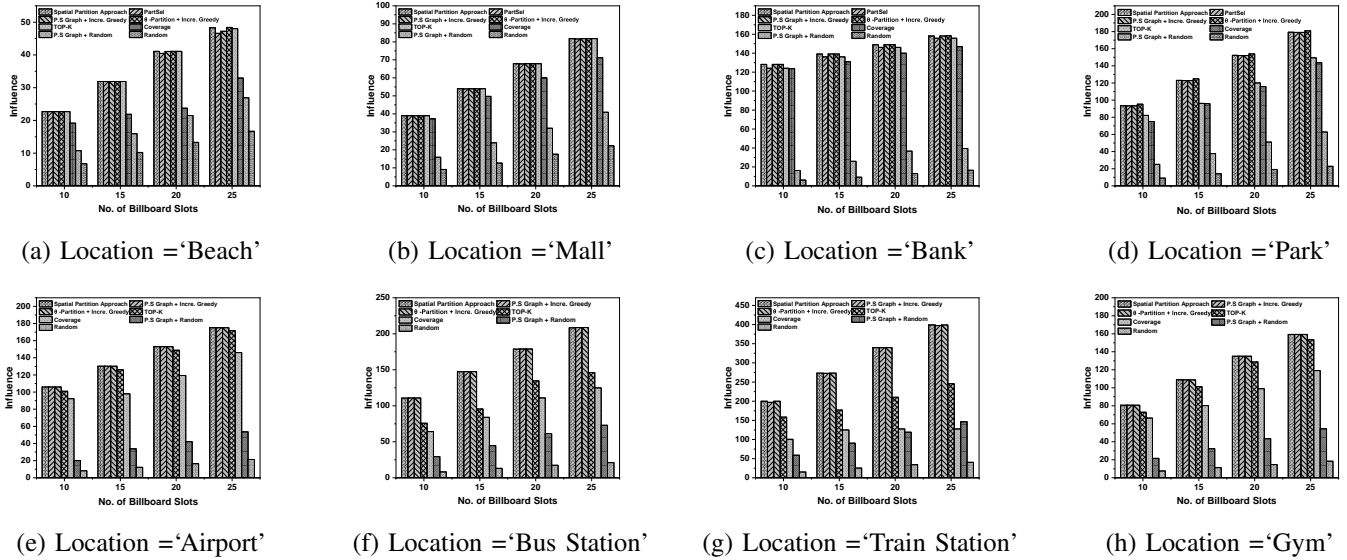

\centering
\begin{tabular}{cccc}
\includegraphics[scale=0.16]{BeachBVsI} & \includegraphics[scale=0.16]{MallBVsI} & \includegraphics[scale=0.16]{BankBVsI} & \includegraphics[scale=0.16]{ParkBVsI} \\
(a) Location =`Beach' & (b) Location =`Mall' & (c) Location =`Bank' &(d) Location =`Park' \\
\includegraphics[scale=0.16]{AirportBVsI} & \includegraphics[scale=0.16]{BusBVsI} & \includegraphics[scale=0.16]{TrainBVsI} & \includegraphics[scale=0.16]{GymBVsI} \\
(e) Location =`Airport' & (f) Location =`Bus Station' & (g) Location =`Train Station' &(h) Location =`Gym' \\

\end{tabular}
\caption{No. of Billboard Slot Vs. Influence plots for different location}
\label{Fig:1_Influence}
\end{figure*}

\paragraph{\textbf{Budget Vs. Time}}
Figure \ref{Fig:2_Time} shows the budget vs. computational time requirement plots for different datasets and all the algorithms. Figures \ref{Fig:1_Influence} and \ref{Fig:2_Time} show that the proposed solutions approach leads to almost equivalent influence compared to the PartSel Method. However, the computational time requirement by the Spatial Partitioning Approach is one-third of the PartSel method on  average. Here we highlight that the PartSel method is inefficient in handling large datasets. Hence, we do not consider this method for the last four datasets. As an example, when the location type is `Beach',  and the number of billboard slots is $21888$, the computational time requirement by the proposed spatial partitioning approach is $293$ Secs. Whereas the same for the PartSel method is $872$ Secs. So, it is approximately three times faster than the PartSel method. This occurs because PartSel follows a divide-and-conquer strategy to combine the solutions from the clusters. In PartSel, it recursively calls EnumSel \cite{zhang2020towards}, which takes huge computational time to select billboard slots from the clusters one by one to generate partial solutions. Also, among the three proposed solution approaches $\theta$-Partition+Incremental Greedy leads to the least computational time. Because this method, based on the cut-off influence, we prune out a significant number of clusters, and hence applying the incremental greedy approach on the top of it, takes much less time.

\begin{figure*}[!ht]
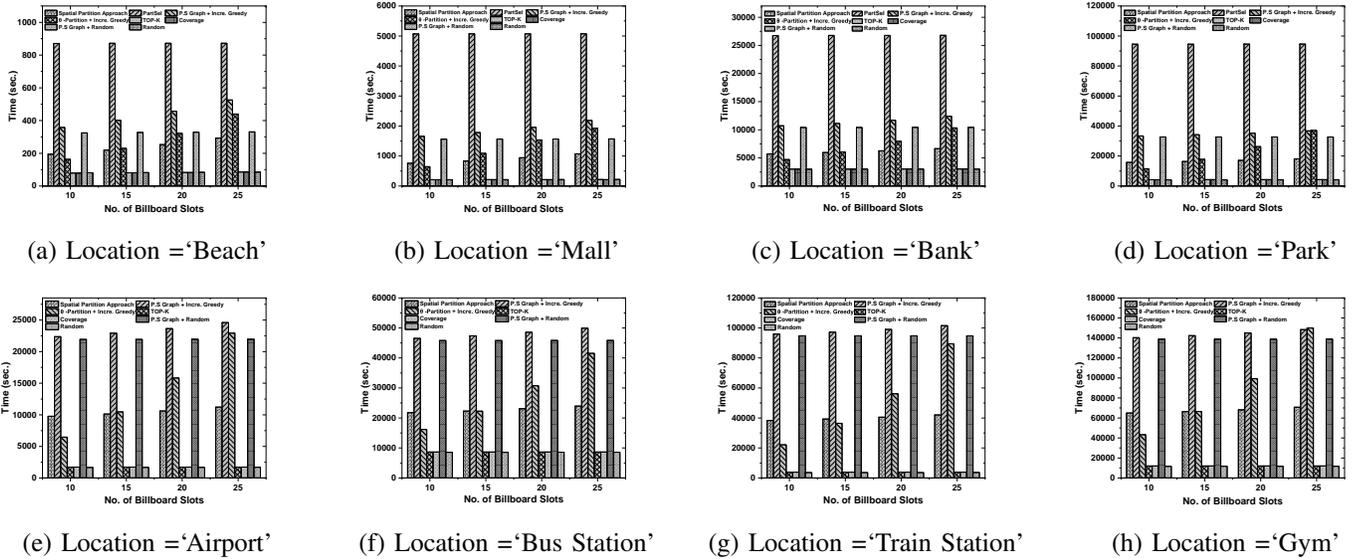

\centering
\begin{tabular}{cccc}
\includegraphics[scale=0.16]{BeachBVsT} & \includegraphics[scale=0.16]{MallBVsT} & \includegraphics[scale=0.16]{BankBVsT} & \includegraphics[scale=0.16]{ParkBVsT} \\
(a) Location =`Beach' & (b) Location =`Mall' & (c) Location =`Bank' &(d) Location =`Park' \\
\includegraphics[scale=0.16]{AirportBVsT} & \includegraphics[scale=0.16]{BusBVsT} & \includegraphics[scale=0.16]{TrainBVsT} & \includegraphics[scale=0.16]{GymBVsT} \\
(e) Location =`Airport' & (f) Location =`Bus Station' & (g) Location =`Train Station' &(h) Location =`Gym' \\

\end{tabular}
\caption{No. of Billboard Slot Vs. Time plots for different location}
\label{Fig:2_Time}
\end{figure*}

\paragraph{\textbf{$\theta$ Vs. Influence}} We vary the value of $\theta$ between $0.1$ to $0.4$ as selecting a good $\theta$ decides the effectiveness and efficiency of the proposed algorithms as shown in Figure \ref{Fig:3_Theta_Influence}.
The experimental results on four datasets (Beach, Mall, Bank, Park) shows the importance of $\theta$ value. From this, we have three main observations (1) With the increase of $\theta$ value, the influence quality decreases because of increasing influence overlap. (2) When $\theta$ value is $0.1$ the Spatial Partition approach and `PartSel' generate an excellent influence quality. (3) In the case of the location Bank, when $\theta$ value is $0.2$, both algorithms take average computational time while producing good quality influence. But in the case of $0.3$ and $0.4$ as $\theta$, the influence quality is deficient but takes less computational time. Beach, Mall, and Park results are very similar to Bank. Therefore, we choose the default $\theta$ value of $0.2$ for the rest of the experiments.

\begin{figure*}[!ht]
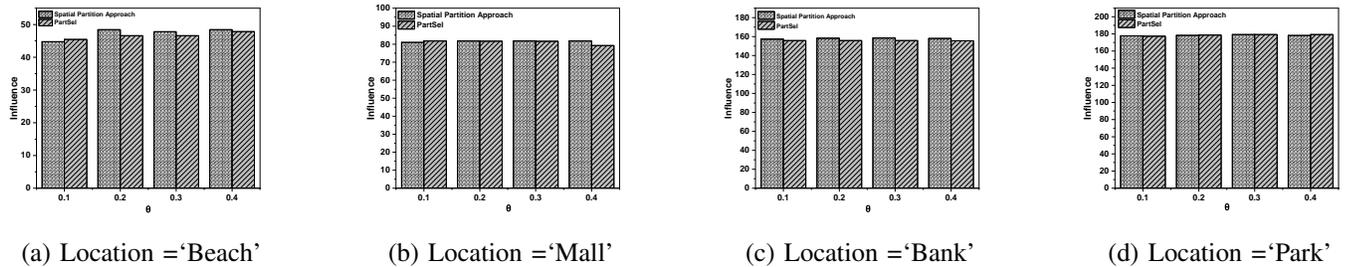

\centering
\begin{tabular}{cccc}
\includegraphics[scale=0.16]{BeachThetaVsINF} & \includegraphics[scale=0.16]{MallThetaVsINF} & \includegraphics[scale=0.16]{BankThetaVsINF} & \includegraphics[scale=0.16]{ParkThetaVsINF} \\
(a) Location =`Beach' & (b) Location =`Mall' & (c) Location =`Bank' &(d) Location =`Park' \\
\end{tabular}
\caption{$\theta$ Vs. Influence plots for different location}
\label{Fig:3_Theta_Influence}
\end{figure*}

\paragraph{\textbf{$\theta$ Vs. Time}}
In the Special Partition approach, as well as `PartSel', $\theta$ value is inversely proportional with time. When $\theta$ value increases, the computational time decreases, but at the same time, influence quality also decreases. For example, in case of location type `Mall' when $\theta$ is $0.1$ Spatial Partition approach takes $140$ secs but with the increase of $\theta$ , i.e., when $0.4$ as $\theta$ time requirement is $100$ sec, as shown in Figure \ref{Fig:4_Theta_Time}.

\begin{figure*}[!ht]
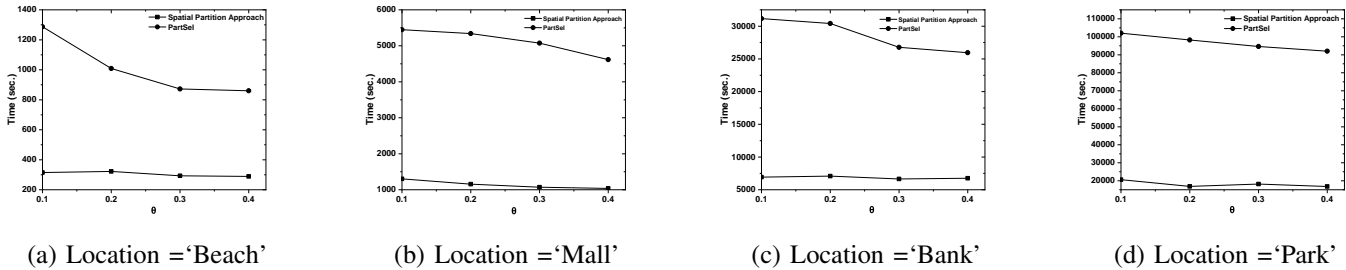

\centering
\begin{tabular}{cccc}
\includegraphics[scale=0.16]{BeachThetaVsTime} & \includegraphics[scale=0.16]{MallThetaVsTime} & \includegraphics[scale=0.16]{BankThetaVsTime} & \includegraphics[scale=0.16]{ParkThetaVsTime} \\
(a) Location =`Beach' & (b) Location =`Mall' & (c) Location =`Bank' &(d) Location =`Park' \\
\end{tabular}
\caption{$\theta$ Vs. Time plots for different Location}
\label{Fig:4_Theta_Time}
\end{figure*}

\paragraph{\textbf{$\theta$ Vs. No. of Cluster}}
To find out the importance of $\theta$ in terms of the number of cluster generations that will decide the influence quality, we vary $\theta$ value from $0.1$ to $0.4$, as shown in Figure \ref{Fig:5_Theta_Cluster}. With the increase of $\theta$, efficiency is improved and at the same time, the number of clusters also increases. For example, in the case of location type `Beach', there are $40$ clusters at $\theta$ = $0.2$, and efficiency does not much worse than that of
$\theta$ = $0.3$ as the largest cluster in `Beach’ covers almost $6\%$ of
total billboard slots. This observations are consistent with other location types also. 

\begin{figure*}[!ht]
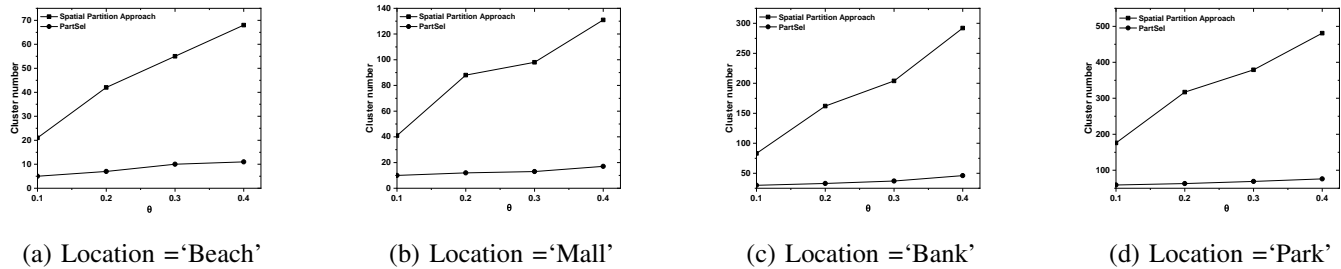

\centering
\begin{tabular}{cccc}
\includegraphics[scale=0.16]{BeachThetaVsCluster} & \includegraphics[scale=0.16]{MallThetaVsCluster} & \includegraphics[scale=0.16]{BankThetaVsCluster} & \includegraphics[scale=0.16]{ParkThetaVsCluster} \\
(a) Location =`Beach' & (b) Location =`Mall' & (c) Location =`Bank' &(d) Location =`Park' \\
\end{tabular}
\caption{$\theta$ Vs. Number of Cluster plots for different location}
\label{Fig:5_Theta_Cluster}
\end{figure*}

\paragraph{\textbf{Distance Vs. Influence}}
To determine the importance of $\lambda$, we varied $\lambda$ value from $25$ to $100$ meters. We observed that it determines the relationships between billboard slots and trajectories, as shown in Figure \ref{Fig:6_DistanceVsInf}. From our experiments, we have made two observations (1) with the increase of $\lambda$, the influence quality increases. For example, when the location type is `Gym’ and $\lambda = 25$ meters, the proposed and existing algorithms influence is lesser than  when $\lambda = 100$ meters. This happens because the distance increases, and one billboard slot can influence more trajectories. (2) We observe that our proposed approaches outperform existing algorithms for all the values of $\lambda$. When $\lambda = 75$ meters and $\lambda = 100$ meters, slight changes are shown in influence quality. These observations are consistent with other location types also, and for this reason, we choose $\lambda = 100$ meter for all the location types.

\begin{figure*}[!ht]
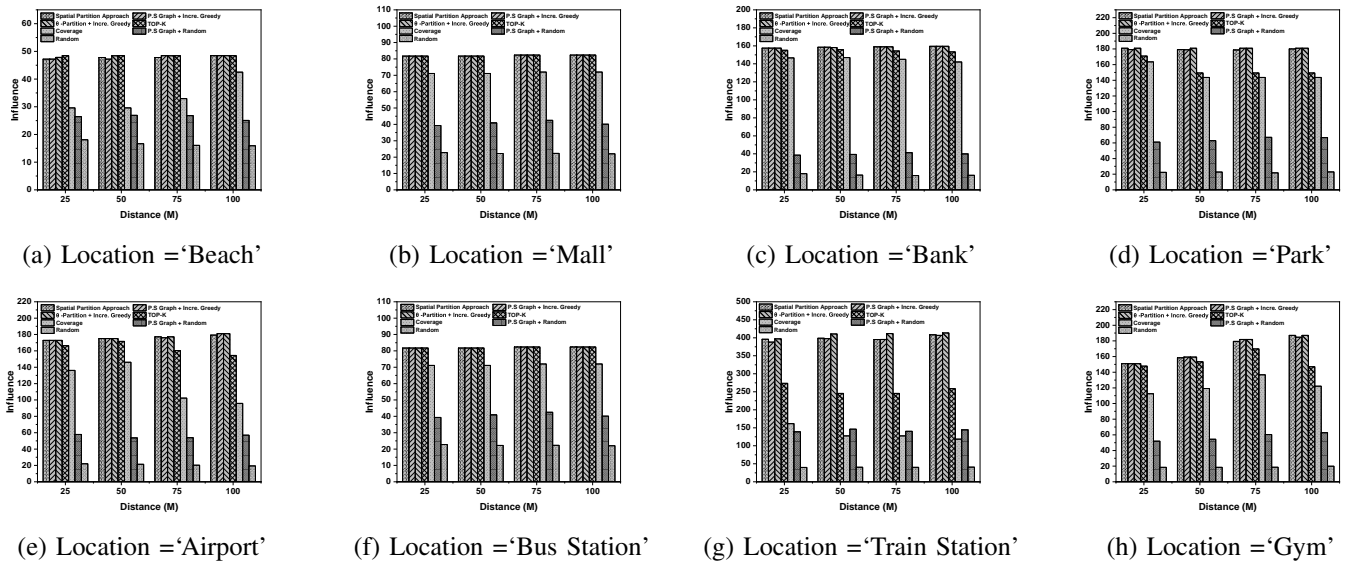

\centering
\begin{tabular}{cccc}
\includegraphics[scale=0.16]{BeachDVsI} & \includegraphics[scale=0.16]{MallDVsI} & \includegraphics[scale=0.16]{BankDVsI} & \includegraphics[scale=0.16]{ParkDVsI} \\
(a) Location =`Beach' & (b) Location =`Mall' & (c) Location =`Bank' &(d) Location =`Park' \\
\includegraphics[scale=0.16]{AirportDVsI} & \includegraphics[scale=0.16]{BusDVsI} & \includegraphics[scale=0.16]{TrainDVsI} & \includegraphics[scale=0.16]{GymDVsI} \\
(e) Location =`Airport' & (f) Location =`Bus Station' & (g) Location =`Train Station' &(h) Location =`Gym' \\
\end{tabular}
\caption{Distance Vs. Influence plots for different location}
\label{Fig:6_DistanceVsInf}
\end{figure*}


\section{Conclusion}\label{Sec:CFD}
In this paper, we examined the \textsl{IBSP} problem, assuming knowledge of trajectory information and billboard locations. We discovered that the problem is not only difficult but also challenging to approximate within a constant factor under a reasonable complexity theoretic assumption. We developed two efficient approaches to tackle this problem. The first approach utilizes the submodularity property of the influence function to identify the most influential slots and creates a pruned submodularity graph to produce an approximate solution. In the second approach, we divide the space into clusters based on influence overlapping ratio and apply the Pruned Submodularity Graph approach for pruning. We conducted experiments with real-world datasets and found that our proposed approaches yielded almost equivalent influence compared to the baseline method `PartSel', and outperformed other methods. Our future work will focus on developing even more efficient solution methodologies.

\bibliographystyle{IEEEtran}
\bibliography{Paper}

\begin{IEEEbiography}[{\includegraphics[width=1in,height=1.25in,clip,keepaspectratio]{Dildar.png}}]{Dildar Ali} is currently pursuing his Ph.D. Degree in the Department of Computer Science and Engineering at Indian Institute of Technology Jammu. Prior to this, he worked as a Project Associate at Indian Institute of Technology Bhilai, India. He obtained his Master of Technology in Computer Science and Engineering from Aliah University, Kolkata, India. His broad research interest includes Graph Theory, Submodular Function Optimization and Applications.
\end{IEEEbiography}

\begin{IEEEbiography}[{\includegraphics[width=1in,height=1.25in,clip,keepaspectratio]{Suman.png}}]{Suman Banerjee} is currently working as an Assistant Professor in the Department of Computer Science and Engineering at Indian Institute of Technology Jammu. Prior to this he was a post doctoral fellow in the Department of Computer Science and Engineering of Indian Institute of Technology Gandhinagar. His broad research interest includes Algorithmic Data Management, Social Network Analysis, Graph Theory and Graph Algorithms, and Parameterized Complexity. He has published more than 30 papers in journals and conferences of international repute. 
\end{IEEEbiography}

\begin{IEEEbiography}[{\includegraphics[width=1in,height=1.25in,clip,keepaspectratio]{Yamuna.jpg}}]{Yamuna Prasad} is currently working as Assistant Professor in the department of computer science and Engineering, Indian Institute of Technology Jammu, India. He received his Ph.D. in computer science and Engineering from Indian Institute of Technology Delhi, India. He was a Postdoctoral Fellow of Thompson Rivers University, BC, Canada from 2017 to 2018, and a visiting scholar with University of Cincinnati, OH, USA, in 2018. He has authored over dozens of articles in peer-reviewed journals and conferences in the area AI, ML and Bioinformatics. His research interests include intersection of artificial intelligence, optimization, soft computing, machine learning, deep learning and security.
\end{IEEEbiography}

\end{document}